\newtheorem{thm}{Theorem}[section]
\newtheorem{defi}[thm]{Definition}
\newtheorem{lem}[thm]{Lemma}
\newtheorem{prop}[thm]{Proposition}
\newtheorem{problem}[thm]{Problem}
\def\tr{\operatorname{tr}}
\def\idty{{\leavevmode\rm 1\mkern -5.4mu I}} %  unit operator
\def\Rl{{\mathbb R}}\def\Cx{{\mathbb C}}
\def\Ir{{\mathbb Z}}\def\Nl{{\mathbb N}}
\def\norm #1{\Vert #1\Vert}
\def\braket#1#2{\langle #1,#2\rangle}
\def\brAket#1#2{\langle #1\vert#2\rangle}
\def\brAAket#1#2#3{\langle#1\vert#2\vert#3\rangle}
\def\bra#1{{\langle#1\vert}}
\def\ket #1{\vert#1\rangle}
\def\ketbra #1#2{{\vert#1\rangle\!\langle#2\vert}}
\def\kettbra#1{\ketbra{#1}{#1}}
\def\abs#1{\vert#1\vert}
\def\Order{{\bf O}}\def\order{{\bf o}}
\def\inv{^{-1}}
\def\erfc{\mbox{\rm erfc}}
\def\re{\Re e}
\def\BB{{\mathfrak B}}\def\HH{{\mathfrak H}}
\def\XX{{\mathcal X}}
\def\gen{{\mathcal L}}  %the generator
\def\genh{{\widehat\gen}}   \def\rhoh{{\widehat\rho}}
\def\semg{{\mathcal T}} %the semigroup
\def\dom{\mathop{\rm dom}\nolimits}
\def\res{{\mathcal R}}  %the resolvent
\def\resl{\res_\lambda}
\def\id{{\rm id}}
\def\Exit{{\mathcal E}}
\def\upromantwo{^{\mbox{\small\textnormal{I\!I}}}}
\def\upromantwo{^{\scriptstyle\rangle\!\langle}}
\def\jsq{j\upromantwo}
\def\domsq#1{(\dom#1)\upromantwo}
\def\LL{L} % as in \LL^1
\def\ppert{{\mathcal P}} % positive perturbation
\def\tcop{{\mathcal S}}  % other operator on tc
\def\reins{{\mathcal S}} % reinsertion
\def\transit{{\mathfrak N}} %transit
\def\id{{\mathcal I}}
\def\jt{\tilde\jmath}
\def\tc{{\mathfrak T}} % traceclass
\def\th{\tc(\HH)}
\def\bh{\BB(\HH)}
\begin{document}
\title{Unbounded generators of dynamical semigroups\footnote{Based on a talk given by R.F.W. at the 2016 Torun Conference}}
\author{I. Siemon$^\dagger$}
\author{A. S. Holevo$^*$}
\author{R. F. Werner$^\dagger$}
\affiliation{$^\dagger$ Leibniz Universit\"at Hannover - Institut f\"ur Theoretische Physik\\
$^*$Steklov Mathematical Institute of Russian Academy of Sciences, Moscow\\
and National Research University Higher School of Economics}

\begin{abstract}
Dynamical semigroups have become the key structure for describing open system dynamics in all of physics. Bounded generators are known to be of a standard form, due to Gorini, Kossakowski, Sudarshan and Lindblad. This form is often used also in the unbounded case, but rather little is known about the general form of unbounded generators. In this paper we first give a precise description of the standard form in the unbounded case, emphasizing intuition, and collecting and even proving the basic results around it. We also give a cautionary example showing that the standard form must not be read too naively. Further examples are given of semigroups, which appear to be probability preserving to first order, but are not for finite times. Based on these, we construct examples of generators which are not of standard form.
\end{abstract}

\maketitle
%%%%%%%%%%%%%%%%%%%%%%%%%%%%%%%%%%%%%%%%%%%%%%%%%%%%%%%%%%%%%%%%%
\section{Introduction}
Dynamical semigroups have become the key structure for describing open system dynamics in all of physics. Their importance for describing processes with decoherence can hardly be underestimated, and with the new push towards quantum technologies, where noise is the principal enemy, their role has been steadily growing. Yet our structural understanding of dynamical semigroups is curiously limited. For comparison look at the case of reversible dynamics: In that case we know that any time evolution with continuous expectation values is implemented by a continuous unitary group, which is in turn generated by a self-adjoint Hamiltonian operator. So we have a complete mathematical characterization of all such evolutions. In fact, the spectral theorem for unbounded self-adjoint operators was one of the first elements of the mathematical structure of quantum mechanics which von Neumann developed, and he did it for just this purpose. The analogous open systems problem then would be the following:

\begin{problem}
Consider a Hilbert space $\HH$. Characterize all one-parameter semigroups $t\mapsto\semg_t$, ($t\geq0$) such that each $\semg_t$ is a completely positive map on the trace class $\th$, and, for any $\rho\in\th$ and any bounded operator $A\in\bh$, we have $\lim_{t\to0}\tr\bigl(\semg_t(\rho)A\bigr)=\tr(\rho A)$.
\end{problem}

Of course, this volume celebrates the solution of a fundamental special case of this problem, namely the case of bounded generators, which is equivalent to the uniform continuity condition $\lim_{t\to0}\norm{\semg_t-\id}=0$. But the problem as written above, i.e., of characterizing also the merely strongly continuous dynamical semigroups or, equivalently, their unbounded generators, is open.
In spite of this, many applications use unbounded versions of the GKLS-form of the generator, which we will call the ``{\it standard form}\/'' in the sequel. The basic idea for the standard form goes back to Davies \cite{DaviesN}, and has been somewhat further developed since \cite{FagnolaProy,HolevoCov,Chebotarev1991}. The typical attitude towards this problem is currently to use unbounded standard forms where it seems natural, but to avoid the general unbounded case. In fact, at the Torun conference one prominent member of our community called that a hopeless problem. Indicative of this state of affairs is that the papers \cite{HolevoExc,HolevoExa} from 1995/96, which presents an example of a non-standard generator has practically not been cited. Likewise underrated is the work of Bill Arveson \cite{ArvesonB}, which also goes well beyond the standard form. The last author is indebted to Franco Fagnola for reminding him of this work, which he earlier had ignored erroneously as being mostly about the special case of endomorphism semigroups \cite{ArvesonGen}.

Therefore, our aim in this survey is twofold: Firstly, we will give a description of the standard form in the unbounded case, emphasizing intuition, and collecting and even proving the basic results around it. We also give a cautionary example (Sect.~\ref{sec:NonClos}) showing that the standard form must not be read too naively. Further examples are given of semigroups \cite{DaviesN,HolevoExa}, which appear to be probability preserving to first order (i.e., when looking only at the generator on the finite-rank part of its domain), but not for finite times. This phenomenon is akin to classical processes allowing escape to infinity in finite time. Secondly, we will give examples of generators which are {\it not} of standard form, by modifying the previous examples.

In order to see what kind of characterization of generators might be hoped for, it is helpful to look for guidance in the classical case. For example, we could replace $\th$ by $\LL^1([0,1])$, the integrable functions on the unit interval, and ask, similarly, for all continuous Markov semigroups on this space. However, the structure of $\LL^1([0,1])$ as an ordered Banach space is identical to the integrable functions on any atomless measure space, like $\Rl$ of $\Rl^d$. So part of the answer would be diffusions on $\Rl^d$, recoded in some way to the unit interval (mapping measure zero sets to measure zero sets).  In particular, not even the dimension $d$ of the underlying space can be seen in the characterization. As a consequence, the classification is likely to be wild and uninformative, unless further structure is imposed. An example would be the Feller condition, demanding that the dynamics in the Heisenberg picture takes continuous functions to continuous functions. Clearly, the algebra of continuous functions is sensitive to dimension, and a fruitful theory becomes possible.

Consider, instead, to  replace $\th$ by a sequence space like $\ell^1(\Ir)$. Then one expects a characterization of the generators in terms of transition rates, and there is some well-developed theory around this \cite{Kemeney}. The quantum case is now somewhere between these two classical examples: On one hand, the trace class has some discreteness like $\ell^1(\Ir)$, because it contains many pure states, and $\bh$ has many minimal projections. On the other hand, there is a continuum of pure states, somewhat reminiscent of the continuum of points in the interval $[0,1]$.

This classical comparison sets the theme for our survey: We will look especially at the pure states in the domain of the generator. This will at the same time give us a useful definition of the standard form and point to the possibility for non-standard generators. We will give rigorous statements throughout, while emphasizing intuition. For generalities on semigroups on Banach spaces we recommend \cite{Nagel}. To fix our setting, we assume throughout that $\HH$ is a separable Hilbert space over $\Cx$. A {\it dynamical semigroup} is just a one-parameter semigroup as described in the above Problem, of which we assume that $\tr\semg_t(\rho)\leq\tr\rho$ for all $0\leq\rho\in\th$. If equality holds here for all $\rho$, we call $\semg_t$  {\it conservative}. Throughout, we will denote the domain of an unbounded operator $\gen$ by $\dom\gen$. We normally work in the Schr\"odinger picture, i.e., in terms of operators $\semg$ on $\th$. Their adjoints acting on the bounded operators $\bh$, a.k.a.\ the channel in the Heisenberg picture, are then denoted by $\semg^*$, so that $\tr\semg(\rho)A= \tr\rho\semg^*(A)$.

%%%%%%%%%%%%%%%%%%%%%%%%%%%%%%%%%%%%%%%%%%%%%%%%%%%%%%%%
\section{Standard generators}
\subsection{Bounded and standard generators}
Let us recall the standard (GKLS-) form of the generator, established in the bounded case. In this case $\semg_t=\exp(t\gen )$, where
\begin{eqnarray}\label{stand}
  \gen\rho&=&K\rho+\rho K^*+\sum_\alpha L_\alpha\rho L_\alpha^*\qquad\mbox{with}\\
         0&\geq&   K+ K^*+\sum_\alpha L_\alpha^* L_\alpha \label{standnorm}
\end{eqnarray}
for some bounded operators $K,L_\alpha$. The set of labels $\alpha$ may be infinite, in which case the sum in \eqref{standnorm} is taken in the weak operator topology and converges as a bounded increasing sequence, and
the sum in \eqref{stand} then converges in trace norm.

A conspicuous feature of this form is the separation into a part associated with $K$, and another which is associated with the jump operators $L_\alpha$. An intuitive way to understand this is the observation that $\exp(t\gen)$ must be a completely positive map norm close to the identity. This means \cite{StineCont} that it must also have a Stinespring dilation close to that of the identity. Now the only Kraus operator in the decomposition of the identity is the unit operator $\idty$, so one of the Kraus operators of $\exp(t\gen)$ can be chosen to be close to $\idty$, say $\approx\idty+t K$. The others will then have to scale like $\approx\sqrt t L_\alpha$, which gives
$\semg_t\rho=\rho+t\gen(\rho)+\Order(t^2)$ with the above generator. The dominant Kraus operator $(\idty+tK)$ belongs to a pure operation, i.e., an operation taking pure states into pure states \cite[Sect.~2.3]{Davies}. The only difference to the unitary case is that this part now typically loses normalization, so the evolution takes pure states to multiples of pure states.

To summarize, the generator splits into one part, which by itself generates an evolution taking pure states to pure states and a second part, which is completely positive. The work of Davies and the stochastic calculus suggest the following terminology:

\begin{defi}\label{def:no-event}
A {\bf no-event semigroup} on a Hilbert space $\HH$ is a dynamical semigroup $\semg^0_t$, $t>0$ such that every pure state $\rho=\kettbra\psi$ is mapped to a multiple of a pure state.
It is necessarily of the form $\semg^0_t\rho=C_t\rho C_t^*$ with $C_t=\exp(tK)$ a strongly continuous contraction semigroup of Hilbert space operators.
\end{defi}

Note that this definition no longer requires $K$ to be bounded. Moreover, it also makes sense in the discrete classical case, i.e., for semigroups on $\ell^1(X)$ for some countable set $X$. Pure states $\delta_x$ are then of the form concentrated on a single point $x\in X$, corresponding to the probability distribution $\delta_x(y)=\delta_{x,y}$. It is easy to see that a no-event semigroup cannot change $x$, i.e., it must be of the form
\begin{equation}\label{noeventclass}
  \bigl(\semg^0_t\bigr)(\delta_x)=e^{-t\mu_x}\,\delta_x,
\end{equation}
where $\mu:X\to\Rl^+$ describes the loss rate from state $x$. The function $\mu$ need not be bounded. Just as in the quantum case, the whole generator will differ from the no-event part by a positive term, which describes the rates of transitions from $x$ to other states $y$, resulting in the usual rate matrix.

The basic idea of constructing the generator (classical or quantum) is that the positive term in the generator will make the semigroup more nearly conservative, i.e., it will compensate some of the normalization loss in $\semg^0_t$. But, due to the overall (sub-)normalization condition $\tr\semg_t(\rho)\leq\tr\rho$, there cannot be more transitions than there is loss. This means the positive part must be {\it bounded with respect to the normalization loss} of the no-event part. Thus all unboundedness is tamed, once it is under control for the no-event part. Of course, in principle, there might not be such a no-event part in the generator. But for the moment, we define the ``good'' case by this property:

\begin{defi}\label{def:standard}
A dynamical semigroup is called {\bf standard} if it is the minimal solution arising from a completely positive perturbation of the generator of a no-event semigroup.
\end{defi}

We have not yet defined ``the minimal solution'' in this sentence, and this will be the task of Sect.~\ref{sec:minsol}. Standard generators look just like \eqref{stand}, with the following changes:
$K$ is the generator of an arbitrary contraction semigroup on $\HH$, and the jump operators need to be operators
\begin{equation}\label{standUB}
   L_\alpha:\dom K\to\HH  \quad\text{with}\ \sum_\alpha\norm{L_\alpha\phi}^2\leq -2\re\braket\phi{K\phi}.
\end{equation}
The generator is thus naturally split into $\gen=\gen^0+\ppert$, i.e., no-event part and completely positive perturbation,  namely
\begin{equation}\label{ppertLind}
  \gen^0(\ketbra\phi\psi)= \ketbra{K\phi}{\psi}+\ketbra{\phi}{K\psi} \quad \text{and}\quad
  \ppert(\ketbra\phi\psi)=\sum_\alpha\ketbra{L_\alpha\phi}{L_\alpha\psi}.
\end{equation}
The natural domain for all these operators is  $\domsq K$, defined as the set of finite linear combinations of rank 1 operators $\ketbra\phi\psi$ with $\phi,\psi\in\dom K$. In particular, the expression for $\ppert$ does not require the adjoint $L_\alpha^*$ to be even defined, which is important because it might not exist (see Sec.~\ref{sec:NonClos} below). The effect of the minimal solution construction is then to extend the domain of $\gen$ beyond $\domsq K$, so that in the end we may well get some  $\rho\in\dom\gen$, for which the individual terms $\gen^0\rho$ and $\ppert\rho$ are no longer well defined.

\subsection{Exit spaces and reinsertions}
In this section we will give a dynamical interpretation of the standard form, which forms the background for the term ``no-event'' semigroup. This interpretation is consistent also with the unbounded standard form. It provides the basis for the more technical statement that, for a standard generator, all the unboundedness is already determined by the no-event part, relative to which the positive perturbation $\ppert$ is bounded. This section provides some background, and is not needed to understand the later sections.

The idea behind the term ``no-event semigroup'' is that it describes the evolution for as long the system has not yet been captured, i.e., up until a detection or ``arrival'' event \cite{RFWarrival,HolevoExc}. Modifying a Hamiltonian by absorbing terms $-iK$ with $K\geq0$ is, in fact, one of the standard ways to describe a detection process. By choosing $K$ to be spatially localized in a region, we get a model of a detector in that region. The probability for detection in the time interval $[t,s]$, starting from an initially normalized state $\rho$ is then, by definition $\tr\semg^0_t\rho-\tr\semg^0_s\rho$. Clearly, this defines a POVM for the arrival time distribution, which also allows for the possibility that the particle never arrives.  We would also like to find the observables which are jointly measurable with arrival. For example, when there are several detectors, we need to know which of them fired. This is naturally captured by the notion of the {\it exit space} of a contraction semigroup \cite{RFWarrival}. For a semigroup $e^{tK}$ we consider the normalization loss as a quadratic form on $\dom K$, and define an exit space for $K$ as a pair $(\Exit,j)$ of a Hilbert space $\Exit$ and a linear map $j:\dom K\to \Exit$ such that, for $\psi,\phi\in\dom K$,
\begin{equation}\label{exitspace}
  \braket{j\psi}{j\phi}=-\frac d{dt}\Bigl\langle e^{tK}\psi\Bigm\vert e^{tK}\phi\Bigr\rangle
      =-\bigl(\braket{K\psi}{\phi}+\braket{\psi}{K\phi}\bigr).
\end{equation}
There is always a unique minimal exit space: The separated completion of $\dom K$ with respect to the above scalar product. However, for reasons which will be apparent later, we also allow non-minimal exit spaces, possibly even with an inequality $\leq$ instead of equality in \eqref{exitspace}.

Now if $F\in\BB(\Exit)$ is an effect operator describing some yes-no-question asked at exit time, we set the probability density for obtaining that result at time $t$,  on an initial preparation $\kettbra\phi$ with $\phi\in\dom K$, to be $\brAAket{je^{tK}\phi}F{je^{tK}\phi}$. More formally, we consider a map
$J:\HH\to\LL^2(\Rl_+,dt;\Exit)$. The range of $J$ is the space of $\Exit$-valued functions on $\Rl_+$, which is canonically isomorphic to $\LL^2(\Rl_+,dt)\otimes\Exit$, but the function notation is more helpful for our purpose. We set $(J\phi)(t)=j\bigl(e^{tK}\phi\bigr)\in\Exit$ for $\phi\in\dom K$. Then $J$ extends to $\HH$ by continuity, because
\begin{eqnarray}
\norm{J\phi}^2&=&\int_0^\infty\!\!dt\ \norm{je^{tK}\phi}_\Exit^2=-\int_0^\infty\!\!dt\ \frac d{dt}\norm{e^{tK}\phi}^2\nonumber\\
              &=&\norm\phi^2-\lim_{t\to\infty}\norm{e^{tK}\phi}^2
              \leq\norm\phi^2  \label{Jbounded}
\end{eqnarray}
The joint probability for an $F$-detection in the time interval $[t,s]$ on the initial state $\rho$ is then
\begin{equation}\label{exitprob}
  \tr\bigl(\rho J^* (\chi_{[t,s]}\otimes F)J\bigr)=\int_t^s\!\!d\tau\ \brAAket{je^{\tau K}\phi}F{je^{\tau K}\phi}.
\end{equation}
Here the right hand side just uses the density mentioned above for $\phi\in\dom K$, and the left hand side is the same for $\rho=\kettbra\phi$, but makes sense for arbitrary $\rho$ by virtue of the continuous extension.

We can turn the arrival time detection into a dynamical, repeatable  process  on $\HH$ by introducing a {\it reinsertion map}, which transforms the ``state upon exit'' into a new state of the system. This is done by a completely positive, trace non-increasing map $\reins:\tc(\Exit)\to\th$. Then the effect $F$ in \eqref{exitprob} may arise from a measurement on the original system, including an arrival time measurement of just the same kind.

Before iterating this idea, let us simplify the description by introducing the Stinespring dilation, i.e., a contraction
$v:\Exit\to\transit\otimes\HH$, so that $\reins(\sigma)=\tr_\transit v\sigma v^*$. Observables on $\transit$ then describe the information that can be extracted at the moment of a jump, so we call $\transit$ the {\it transit space}. Composing $v$ with $j$ we get a map $\jt=vj:\dom K\to\transit\otimes\HH$, which, apart from the special form of the image space satisfies exactly the requirements \eqref{exitspace} for an exit space (possibly with an inequality, if $\reins$ can reduce the trace). In this sense a process of exit and reinsertion is completely specified by an exit space of the special form $(\transit\otimes\HH,\jt)$.

From now on we will take $J$ to be defined by $\jt$.  We can iterate this operator to a sequence of maps $J^{(n)}:\HH\to\bigl(\LL^2(\Rl_+,dt;\Exit)\bigr)^{\otimes n}\otimes\HH$, with $J^{(0)}=\idty_\HH$, $J^{(1)}=J$, and $J^{(n+1)}=(\idty^{\otimes n}\otimes J)J^{(n)}$. This has the same interpretation as $J$, only that we are now looking at $n$ consecutive events. The $n$ time arguments of wave functions in this space are the time increments between successive events. In order to get a dynamical semigroup out of this iteration, we need to fix a time interval $[0,\tau]$ and look only at events happening during this interval. We also need to evolve the system up to time $\tau$ after the last event with a further application of the no-event semigroup. Thus we set $J^{(n)}_\tau$ to be a map between the same spaces as $J^{(n)}$, but modified as
\begin{equation}\label{Jntau}
 (J^{(n)}_\tau\phi)(t_1,\ldots,t_n)
       =(\idty^{\otimes n}\otimes e^{(\tau-\sum_i t_i)K}) (J^{(n)}\phi)(t_1,\ldots,t_n), \end{equation}
whenever $\sum_it_i\leq\tau$, and zero otherwise. So $J^{(n)}_\tau$ is a dilation of the evolution conditional on exactly $n$ events happening in that interval. The conditional evolution up to the end of this interval is $\semg^{(n)}_\tau\rho=\tr_{\rm events} J^{(n)}_\tau\rho J^{(n)\,*}_\tau$, where the trace is the partial trace over the tensor factor $\bigl(\LL^2(\Rl_+,dt;\Exit)\bigr)^{\otimes n}$.
Then $\semg_\tau\rho=\sum_{n=0}^\infty\semg^{(n)}_\tau$ is a dynamical semigroup. In fact, it is the same minimal semigroup as constructed in the next section.
We will not go through the proof of this assertion, which is best done via the Laplace transforms of the $\semg^{(n)}_\tau$, which turn out to be the exactly the terms in the sequence \eqref{resmin} below.

Experts in the stochastic calculus will easily recognize the dilation construction here. In fact, when we write the time arguments in the space $\bigl(\LL^2(\Rl_+,dt;\Exit)\bigr)^{\otimes n}$ not as increments
but as the absolute event times $\tau_i=\sum_{k=1}^i t_i$, we get wave functions defined on ordered time arguments, which have unique symmetric and antisymmetric extensions to arbitrary $n$ tuples of times, yielding the Fermionic and the Bosonic stochastic integrals. Our focus here was just the dynamical semigroup, however, and specifically to trace the implications of unboundedness through the construction. Indeed the turning point is \eqref{Jbounded}: Once $J$ has been extended from $\dom K$ to a bounded operator on all of $\HH$, the entire further construction is in terms of  bounded operators, and no more domain questions need to be addressed.

The exit\&reinsertion picture suggests other standard ways to look at the generator, which are brought together with the form \eqref{standUB} in the following proposition. It also lists (in ({\it d\/})) the form we prefer for the next section. For the action of the exit space injection $j$ on mixed states we introduce the linear operator $\jsq:\domsq K\to\tc(\Exit)$ given by
$\jsq(\ketbra\phi\psi)=\ketbra{j\phi}{j\psi}$. Then we have:

\begin{prop}\label{prop:stdExit}
Let $t\mapsto\exp(tK)$ be a contraction semigroup on $\HH$ with generator $K$ and minimal exit space $(\Exit,j)$.
Then standard generators with no-event semigroup $\semg^0_t\rho=e^{tK}\rho e^{tK ^*}$ are equivalently characterized by any of the following sets:
\begin{itemize}
\item[(a)] Completely positive ``reinsertion'' maps $\reins:\tc(\Exit)\to\th$ with\\ $\tr\reins(\sigma)\leq\tr\sigma$.
\item[(b)] Non-minimal exit spaces of product form, i.e., maps $\jt:\dom K\to\transit\otimes\HH$ such that $\norm{\jt\phi}^2\leq 2\re\braket\phi{K\phi}$.
\item[(c)] Maps $\ppert:\domsq K\to\th$, which can be  written in the form \eqref{ppertLind} with jump operators $L_\alpha$ satisfying \eqref{standUB}.
\item[(d)] Completely positive maps $\ppert:\dom\gen^0\to\th$, with $\tr\ppert\rho\leq-\tr\gen^0\rho$ \\for all positive $\rho\in\dom\gen^0$.
\end{itemize}
The correspondence is given by restriction from (d) to (c), and by unique $\gen^0$-graph-norm continuous extension in the other direction.
Between (a),(b),(c) it is given on $\domsq K$ by $\ppert=\reins\jsq=(\id\otimes \tr_\transit)\jt\upromantwo$.
Possible choices of jump operators correspond precisely to choices of Kraus operators for $\reins$ or a basis $e_\alpha\in\transit$,
with $\reins(\sigma)=\sum_\alpha M_\alpha\sigma M_\alpha^*$, via $L_\alpha=M_\alpha j$
and $\jt\phi=\sum_\alpha e_\alpha\otimes L_\alpha\phi$.
\end{prop}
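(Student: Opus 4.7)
My plan is to establish the cycle (a) $\Leftrightarrow$ (b) $\Leftrightarrow$ (c) by dilation and orthonormal-basis expansions, which is essentially bookkeeping, and then to treat (c) $\Leftrightarrow$ (d) separately via a graph-norm extension argument, which is where the real content sits.

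For (a) $\Rightarrow$ (b) I would apply Stinespring to $\reins$: since $\reins$ is CP and trace non-increasing, there is a contraction $v:\Exit\to\transit\otimes\HH$ with $\reins(\sigma)=\tr_\transit v\sigma v^*$, and setting $\jt:=vj$ gives $\|\jt\phi\|^2\leq\|j\phi\|^2=-2\re\braket\phi{K\phi}$ by \eqref{exitspace}. The converse (b) $\Rightarrow$ (a) uses the universal property of the minimal exit space: the estimate $\|\jt\phi\|^2\leq\|j\phi\|^2$ exhibits $\jt$ as factoring through $j$ by a unique contraction $v:\Exit\to\transit\otimes\HH$, from which $\reins(\sigma):=\tr_\transit v\sigma v^*$ recovers a map of the type (a). For (b) $\Leftrightarrow$ (c) I fix an ONB $\{e_\alpha\}$ of $\transit$ and use the bijection $\jt\phi=\sum_\alpha e_\alpha\otimes L_\alpha\phi$; then $\|\jt\phi\|^2=\sum_\alpha\|L_\alpha\phi\|^2$, so the two bounds coincide. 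A direct computation of $\tr_\transit\ketbra{\jt\phi}{\jt\psi}$ in this basis yields $\sum_\alpha\ketbra{L_\alpha\phi}{L_\alpha\psi}$, verifying both the displayed formula $\ppert=\reins\jsq=(\id\otimes\tr_\transit)\jt\upromantwo$ on $\domsq K$ and the claim that the freedom in choosing the jump operators is exactly the Kraus freedom of $\reins$, equivalently the basis freedom in $\transit$.

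The direction (d) $\Rightarrow$ (c) is immediate since $\domsq K\subset\dom\gen^0$ with $\gen^0$ acting by \eqref{ppertLind}. For (c) $\Rightarrow$ (d), the fundamental pointwise inequality is, for a rank-one $\rho=\kettbra\phi$ with $\phi\in\dom K$,
\begin{equation*}
\tr\ppert(\rho)=\sum_\alpha\|L_\alpha\phi\|^2\leq-2\re\braket\phi{K\phi}=-\tr\gen^0(\rho).
\end{equation*}
Any positive $\rho\in\domsq K$ has finite rank with range in $\dom K$, so it diagonalizes in eigenvectors from $\dom K$ and the inequality extends to the positive cone of $\domsq K$. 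Complete positivity of $\ppert$ then upgrades this to $\gen^0$-graph-norm boundedness of $\ppert$ on general (not necessarily positive) $\rho\in\domsq K$, by bounding $\|\ppert(\rho)\|_1$ through the positive parts of a Jordan decomposition of the self-adjoint and anti-self-adjoint pieces. A unique continuous extension of $\ppert$ to the $\gen^0$-graph-norm closure of $\domsq K$ follows, and both CP and the trace inequality pass to the limit. The main obstacle I anticipate is showing that $\domsq K$ is actually a core for $\gen^0$, i.e.\ that its graph-norm closure exhausts $\dom\gen^0$; this relies on the strong continuity of $\semg^0_t\rho=e^{tK}\rho e^{tK^*}$ on $\th$, the invariance of $\domsq K$ under $\semg^0_t$ (since $e^{tK}$ preserves $\dom K$), and trace-norm density of $\domsq K$ in $\th$. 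All three ingredients are at hand, but the core property is the least automatic step and deserves explicit attention in the proof; once it is in place, the explicit correspondences among $\reins$, $\jt$, and $\{L_\alpha\}$ asserted at the end of the proposition are just the formulas extracted in the previous paragraph.
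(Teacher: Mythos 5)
Your architecture matches the paper's: the (a)--(b)--(c) cycle via Stinespring/ONB bookkeeping is indeed routine (the paper dismisses it in one sentence), (d)$\to$(c) is restriction, and the whole weight of the proposition sits on (c)$\to$(d), where you also correctly flag and resolve the core property of $\domsq K$ (invariance under $\semg^0_t$ plus density). However, at the one genuinely nontrivial step you have a gap: the passage from the trace inequality on \emph{positive} elements to $\gen^0$-graph-norm boundedness of $\ppert$ on \emph{general} elements of $\domsq K$. Your chain is $\norm{\ppert\rho}_1\leq\sum_{j,\pm}\tr\ppert(\rho_j^{\pm})\leq\sum_{j,\pm}\bigl(-\tr\gen^0(\rho_j^{\pm})\bigr)$ via Jordan decomposition of the real and imaginary parts, but the terminal quantity is the normalization loss of $|\rho_j|$, not of $\rho_j$, and there is no a priori control of $-\tr\gen^0(\rho_j^{\pm})$ (equivalently of $\gen^0|\rho_j|$) by $\norm{\gen^0\rho_j}_1+\norm{\rho_j}_1$: the Jordan decomposition does not commute with $\gen^0$. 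As written, the decisive inequality is asserted ("upgrades this to...") rather than proved.

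The paper closes exactly this step with a specific trick: writing $\rho=\sum_\ell r_\ell\ketbra{\phi_\ell}{\psi_\ell}$ in singular value form with $\{\phi_\ell\}$, $\{\psi_\ell\}$ orthonormal in $\dom K$, it bounds $\norm{\jsq\rho}_1\leq\sum_\ell\tfrac{r_\ell}{2}(\norm{j\phi_\ell}^2+\norm{j\psi_\ell}^2)=-\re\tr\bigl(W\gen^0\rho\bigr)\leq\norm{\gen^0\rho}_1$, where $W=\sum_m\ketbra{\psi_m}{\phi_m}$ is a partial isometry; then $\norm{\ppert\rho}_1\leq\norm{\jsq\rho}_1$ since $\reins$ is trace non-increasing. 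Your Jordan route can in fact be repaired by the same duality device (for self-adjoint $\rho_j$ one has $-\tr\gen^0(\rho_j^+)-\tr\gen^0(\rho_j^-)=-\re\tr(W\gen^0\rho_j)$ with $W$ the sign of $\rho_j$, hence $\leq\norm{\gen^0\rho_j}_1$), but some such partial-isometry/duality argument must be supplied explicitly; without it the extension to $\dom\gen^0$ does not follow.
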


\begin{proof}
The equivalences are largely trivial to verify on $\domsq K$, or have already been described in the text above. The only statement not of this kind is the continuous extension $(c){\to}(d)$.
Here we note that $\domsq K\subset\dom\gen^0$ is invariant and dense in $\th$, hence a core, so that continuity will guarantee an extension to $\dom\gen^0$.
Since $\reins$ is clearly trace norm continuous, the identity $\ppert=\reins\jsq$ shows that we only need to prove the continuity of $\jsq$, i.e., the statement that $\jsq\rho_n\to0$, whenever $\rho_n\to0$ and $\gen^0\rho_n\to0$ (each limit in trace norm). We will do this by establishing the estimate $\norm{\jsq\rho}\leq \norm{\gen^0\rho}$.

By definition of $\domsq K$, we can write $\rho=\sum_\ell^N r_\ell \ketbra{\phi_\ell}{\psi_\ell}$ with $r_\ell\in\Cx$ and $\phi_\ell,\psi_\ell\in\dom K$. Now on the finite dimensional span of the $\phi_\ell,\psi_\ell$ we can perform a singular value decomposition and get a more canonical form of $\rho$, where $r_\ell>0$, and each of the families $\{\phi_\ell\},\{\psi_\ell\}$ is orthonormal. Then we have
\begin{eqnarray}
  \norm{\jsq\rho}&=&\Bigl\Vert\sum_\ell r_\ell \ketbra{j\phi_\ell}{j\psi_\ell}\Bigr\Vert
%    \nonumber\\&\leq&
      \leq\sum_\ell r_\ell \norm{j\phi_\ell}\norm{j\psi_\ell}\nonumber\\
     &\leq&\sum_\ell \frac{r_\ell}2 \bigl(\norm{j\phi_\ell}^2+\norm{j\psi_\ell}^2\bigr)
%    \nonumber\\&=&
      =-\re\sum_\ell {r_\ell} \bigl(\braket{\phi_\ell}{K\phi_\ell}+\braket{K\psi_\ell}{\psi_\ell}\bigr) \nonumber\\
     &=&-\re\sum_{\ell,m} {r_\ell} \bigl(\braket{\phi_m}{K\phi_\ell}\braket{\psi_\ell}{\psi_m}
                                         +\braket{\phi_m}{\phi_\ell}\braket{K\psi_\ell}{\psi_m}\bigr) \nonumber\\
     &=&-\re\tr W\gen^0\rho,
\end{eqnarray}
where $W=\sum_m\ketbra{\psi_m}{\phi_m}$. This is a partial isometry, so $\norm W=1$, and hence $\norm{\jsq\rho}\leq\norm{\gen^0\rho}$.
\end{proof}

\subsection{The minimal solution}\label{sec:minsol}
Adding a further term (``a perturbation'') to a well-known ``simple'' generator is, of course, commonplace throughout quantum mechanics and more general evolution equations. Very often one considers perturbations which are relatively bounded with respect to the given generator. In this case \cite{RSimon} the domain of the perturbed generator remains the same. The perturbations considered here will usually {\it not} be of this kind. There are two equivalent versions of the construction. One is based on the resolvent series \cite{DaviesN}, and one on the iteration of integral equations \cite{HolevoExc}. Since the resolvent version can be stated slightly more compactly, and we will need to consider resolvents anyhow, we will choose this version.

The resolvent of a semigroup $\semg_t=\exp(t\gen)$ is given, for any $\lambda>0$, by the integral
\begin{equation}\label{resgen}
  \resl=(\lambda-\gen)\inv=\int_0^\infty\!dt\ e^{-\lambda t}\semg_t.
\end{equation}
From this definition it is clear that $\resl$ is completely positive, and satisfies the norm bound $\norm{\lambda\resl}\leq 1$, and the resolvent identity $\resl-\res_\mu=(\mu-\lambda)\resl\res_\mu$. Conversely, any family of operators $\resl:\th\to\th$ satisfying these conditions defines a dynamical semigroup, which can be recovered by the formula
\begin{equation}\label{res2sem}
  \semg_t=\lim_n\bigl(1-\frac tn\gen\bigr)^{-n}=\lim_{n\to\infty} \left(\frac nt \res_{n/t}\right)^n.
\end{equation}
Here the middle part is provided only as a formal expression to explain what the right hand side should look like (but see \cite{Nagel} for a proof). Moreover, we have, for any $\lambda>0$,
\begin{equation}\label{domL}
  \dom\gen=\resl\bigl(\th\bigr),\qquad\mbox{with}\ \gen\resl\rho=\lambda\resl\rho-\rho.
\end{equation}

Now consider a generator $\gen^0$, typically (but not always) of a no-event semigroup, from which we would like to construct a new generator $\gen=\gen^0+\ppert$ with $\ppert$  completely positive.
For the construction of standard generators the forms of $\gen^0$ and $\ppert$ are given in \eqref{ppertLind}. The domain of $\gen$ should be at least $\dom\gen^0$, and we want the normalization of the new semigroup to be non-increasing. This fixes the normalization condition \eqref{standUB}. Moreover, for $\rho\geq0$,
\begin{equation}\label{ppertnormal}
  0\geq \tr(\gen^0+\ppert)\resl^0\rho=\tr\bigl(\lambda\resl^0\rho-\rho +\ppert\resl^0\rho\bigr)
    \geq\tr\ppert\resl^0\rho-\tr\rho.
\end{equation}
Hence $\ppert\resl^0$ is everywhere defined, completely positive, and trace non-increasing.
Therefore, $\norm{\ppert\resl^0}\leq1$.

Formally, we get the resolvent $\resl$ of the perturbed semigroup from
\begin{equation}\label{resdiff}
  \resl-\resl^0=\resl\Bigl((\lambda-\gen^0)-(\lambda-\gen)\Bigl)\resl^0=\resl\ppert\resl^0.
\end{equation}
Still proceeding formally, we can use this to determine $\resl$ by iteration, or equivalently to solve the Neumann series for $(\id-\ppert\resl^0)\inv$ to find:
\begin{equation}\label{resmin}
  \resl=\sum_{n=0}^\infty \resl^0\bigl(\ppert\resl^0\bigr)^n.
\end{equation}
The basic algebra here is quite standard, and used also for the relatively bounded perturbation theory of generators. In that case $\norm{\ppert\resl^0}<1$, so the series obviously converges in norm. Moreover, one can then write the factor $\resl^0$ outside the sum, so that $\dom\gen=\resl(\th)\subset\resl^0(\th)=\dom\gen^0$, and the domain will not increase. This will be different now. We state the basic construction result without assuming that $\gen^0$ is a no-event semigroup. This is because this generalization will be needed in Sect.~\ref{sec:nonstand}. For use in that section we also provide Lemma~\ref{lem:finrank}, showing that sometimes the domain does not increase.

\begin{prop}\label{prop:minsol}
Let $\gen^0$ be the generator of a dynamical semigroup, and let $\ppert:\dom\gen^0\to\th$ be a completely positive map such that, for $0\leq\rho\in\dom\gen^0$,
\begin{equation}\label{subnPP}
  \tr\ppert(\rho)\leq -\tr\gen^0(\rho).
\end{equation}
Then $\ppert\resl^0$ is a completely positive operator on $\th$, and the series \eqref{resmin} converges strongly to the resolvent $\resl$ of a dynamical semigroup. $\XX=\resl$ is the smallest completely positive solution of the equation $\XX=\resl^0+\XX\ppert\resl^0$ in completely positive ordering, and is hence called the {\bf minimal resolvent solution} associated with the perturbation $\ppert$.
\end{prop}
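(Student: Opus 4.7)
The plan is to build $\resl$ directly from the Neumann-type series in \eqref{resmin}, extract the necessary contraction and resolvent identities from the assumption \eqref{subnPP}, and then invoke a Hille--Yosida-type characterisation to recover the semigroup.

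First I would verify that $\ppert\resl^0$, already defined on all of $\th$ since $\resl^0$ maps $\th$ into $\dom\gen^0$, is completely positive and satisfies $\tr\ppert\resl^0\sigma\leq\tr\sigma$ for positive $\sigma\in\th$. Indeed, for such $\sigma$ the element $\rho=\resl^0\sigma\in\dom\gen^0$ is positive because $\resl^0$ is CP, so \eqref{subnPP} combined with \eqref{domL} gives
\begin{equation*}
  \tr\ppert\resl^0\sigma\leq-\tr\gen^0\resl^0\sigma=\tr\sigma-\lambda\tr\resl^0\sigma\leq\tr\sigma.
\end{equation*}

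Next, for positive $\rho\in\th$ set $\sigma_n=(\ppert\resl^0)^n\rho$; each $\sigma_n$ is a positive trace-class operator, and applying the previous bound with $\sigma_n$ in place of $\sigma$ yields the telescoping inequality $\lambda\tr\resl^0\sigma_n\leq\tr\sigma_n-\tr\sigma_{n+1}$. Summing,
\begin{equation*}
  \lambda\sum_{n=0}^N\tr\resl^0(\ppert\resl^0)^n\rho\leq\tr\rho-\tr\sigma_{N+1}\leq\tr\rho.
\end{equation*}
The partial sums thus form an increasing sequence of positive trace-class operators with uniformly bounded trace, hence converge in trace norm by monotone convergence in $\th$. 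Linearity extends the definition of $\resl$ to all of $\th$, producing a completely positive operator with $\norm{\lambda\resl}\leq1$.

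For the resolvent identity, the series yields directly $\resl=\resl^0+\resl\ppert\resl^0$. Using this identity for both $\resl$ and $\res_\mu$, together with the resolvent identity $\resl^0-\res^0_\mu=(\mu-\lambda)\resl^0\res^0_\mu$ for the unperturbed semigroup, a straightforward rearrangement (legitimised by the trace bound above) produces $\resl-\res_\mu=(\mu-\lambda)\resl\res_\mu$. The property $\lim_{\lambda\to\infty}\lambda\resl\rho=\rho$ reduces to the analogous fact for $\resl^0$ via $\lambda\resl\rho-\lambda\resl^0\rho=\lambda\resl\ppert\resl^0\rho$, whose trace on positive $\rho$ is bounded by $\tr\rho-\lambda\tr\resl^0\rho\to 0$. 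A Hille--Yosida-type theorem for CP resolvents then identifies $\resl$ as the resolvent of a dynamical semigroup recovered via \eqref{res2sem}.

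Finally, for minimality, any completely positive solution $\XX$ of $\XX=\resl^0+\XX\ppert\resl^0$ satisfies $\XX\geq\resl^0$ because $\XX\ppert\resl^0$ is CP, and iterating the equation gives $\XX\geq\sum_{n=0}^N\resl^0(\ppert\resl^0)^n$ for every $N$; letting $N\to\infty$ yields $\XX\geq\resl$ in the CP order. I expect the resolvent identity step to be the main delicate point: the rearrangements involve doubly-indexed sums of operators that are a-priori only strongly convergent, and one has to verify carefully that every reshuffling is legitimate in trace norm. The telescoping trace bound from step two is precisely the device that keeps those manipulations under control.
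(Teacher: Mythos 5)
Your proposal is correct and follows essentially the same route as the paper: your telescoping trace bound $\lambda\sum_{n=0}^N\tr\resl^0(\ppert\resl^0)^n\rho\leq\tr\rho$ is precisely the paper's inductive estimate on the partial sums, and the minimality argument by iterating the fixed-point equation against an arbitrary CP solution is identical. You additionally sketch the resolvent identity, the limit $\lambda\resl\rho\to\rho$, and the Hille--Yosida recovery of the semigroup, which the paper's (explicitly abbreviated) proof omits; that part is plausible and you rightly flag the rearrangement of the strongly convergent double series as the one step needing care.
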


\begin{proof}
We only sketch the key idea, which makes clear why the series indeed converges, even without assuming $\norm{\ppert\resl^0}<1$. The the partial sum truncated at $n$ is just the $n^{\rm th}$ iterate $\res^{(n)}_\lambda$ defined by $\res^{(0)}_\lambda=\resl^0$ and
\begin{equation}\label{ppertiterate}
  \res^{(n+1)}_\lambda=\resl^0+\res^{(n)}_\lambda\ppert\resl^0.
\end{equation}
We will prove by induction that for positive $\rho$, we have $\tr\lambda\res^{(n)}_\lambda\rho\leq\tr\rho$. Indeed, this is true for $n=0$, like for the resolvent of any dynamical semigroup and, by the induction hypothesis,
\begin{eqnarray}
  \tr\lambda\res^{(n+1)}_\lambda\rho
    &\leq& \tr\lambda\resl^0\rho + \tr\ppert\resl^0\rho    \nonumber\\
    &\leq& \tr\lambda\resl^0\rho- \tr\gen\resl^0\rho
      =\tr\lambda\resl^0\rho-\tr\bigl(\lambda\resl^0\rho-\rho\bigr)=\tr\rho.   \nonumber
\end{eqnarray}
Hence the sequence $\lambda\res^{(n)}_\lambda\rho$ is increasing and uniformly bounded in trace norm, and therefore convergent in norm. By linearity this extends to the trace class, and applying it to a matrix of trace class operators we conclude that the limit $\resl$ is a completely positive operator.

If $\tcop$ is any completely positive solution of the equation in the Proposition, we have that $(\tcop-\resl^{(0)})=(\tcop-\resl^0)$ is completely positive, and because
\begin{equation}\label{tcopiterate}
  (\tcop-\res^{(n+1)}_\lambda)=(\tcop-\res^{(n)}_\lambda)\ppert\resl
\end{equation}
this persists through the iteration, and the result follows by taking the limit.
\end{proof}

\begin{lem}\label{lem:finrank}
If, in the setting Prop.~\ref{prop:minsol}, the perturbation $\ppert$ has finite rank, we have $\dom\gen=\dom\gen^0$.
\end{lem}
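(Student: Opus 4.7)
The plan is to establish $\dom\gen^0\subset\dom\gen$ and $\dom\gen\subset\dom\gen^0$ separately. The first inclusion holds without any rank hypothesis: the resolvent identity $\resl=\resl^0+\resl\ppert\resl^0$ from Proposition~\ref{prop:minsol} rearranges to $\resl^0\eta=\resl(\eta-\ppert\resl^0\eta)$ for every $\eta\in\th$, so $\dom\gen^0=\resl^0(\th)\subset\resl(\th)=\dom\gen$. Only the reverse inclusion uses the finite-rank hypothesis.

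Let $V:=\ppert(\dom\gen^0)$, a finite-dimensional subspace of $\th$ which is $*$-invariant (since $\ppert$ is $*$-preserving), and set $T:=\ppert\resl^0\vert_V:V\to V$. For any $v\in V$ the resolvent identity specialises to $\resl v=\resl^0 v+\resl(\ppert\resl^0 v)=\resl^0 v+\resl(Tv)$, i.e.\ $\resl(1-T)v=\resl^0 v$. If $1-T$ is invertible on the finite-dimensional space $V$, this gives $\resl v=\resl^0(1-T)\inv v\in\resl^0(V)\subset\dom\gen^0$ for every $v\in V$. Combining with $\resl\eta=\resl^0\eta+\resl(\ppert\resl^0\eta)$ and noting $\ppert\resl^0\eta\in V$, this yields $\resl\eta\in\dom\gen^0$ for every $\eta\in\th$, as required.

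The real content is to check that $1\notin\spec(T)$ as a linear map on $V$, since invertibility of $1-T$ on a finite-dimensional space reduces to injectivity. Suppose $Tv=v$ with $v\in V$; $*$-invariance of $V$ and $\Cx$-linearity let us assume $v=v^*$. Splitting $v=v^+-v^-$ via the Jordan decomposition in $\th$, complete positivity of $\ppert\resl^0$ and the triangle inequality for trace norm yield
\[
 \norm{Tv}_1\leq\tr(Tv^+)+\tr(Tv^-)=\tr T\abs{v}=\tr\ppert(\resl^0\abs{v})\leq\tr\abs{v}-\lambda\tr\resl^0\abs{v},
\]
where the final step applies hypothesis \eqref{subnPP} to the positive element $\resl^0\abs{v}\in\dom\gen^0$, together with the identity $\tr\gen^0\resl^0\abs{v}=\lambda\tr\resl^0\abs{v}-\tr\abs{v}$ from \eqref{domL}. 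With $Tv=v$ the chain collapses to $\norm{v}_1\leq\norm{v}_1-\lambda\tr\resl^0\abs{v}$, forcing $\tr\resl^0\abs{v}=0$; positivity of $\resl^0\abs{v}$ then gives $\resl^0\abs{v}=0$, and injectivity of $\resl^0=(\lambda-\gen^0)\inv$ forces $\abs{v}=0$, hence $v=0$.

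The main obstacle is exactly this spectral step. The naive bound $\norm{\ppert\resl^0}\leq1$ only delivers spectral radius $\leq 1$ for $T$, not invertibility of $1-T$; to rule out the eigenvalue $1$ one must route the estimate through the positive operator $\abs{v}$ in order to turn the weak inequality \eqref{subnPP} into a strict statement via the injectivity of the unperturbed resolvent.
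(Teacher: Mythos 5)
Your proof is correct, and it takes a genuinely different route from the paper's. The paper goes for the stronger statement that $\norm{(\ppert\resl^0)^n}<1$ for some $n$: it encodes $\ppert\resl^0$ in a finite matrix $P$, rules out peripheral eigenvalues $\omega$ with $\abs\omega=1$ via a $2$-positivity (Schwarz-inequality) argument that upgrades a unimodular eigenvector to a genuine fixed point, and then derives a contradiction because a fixed point would make the resolvent series \eqref{resmin} diverge against the trace bound from Prop.~\ref{prop:minsol}; with spectral radius strictly below one the Neumann series converges in norm, the factor $\resl^0$ can be pulled out front, and $\resl(\th)\subset\resl^0(\th)$ follows. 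You instead notice that the full spectral statement is not needed: the identity $\resl(1-T)v=\resl^0v$ on the finite-dimensional range $V$ reduces everything to $1\notin\spec(T)$, which you exclude directly by routing the trace estimate through $\abs v$ and invoking injectivity of $\resl^0$. This is more economical --- no weak limits, no excursion into the peripheral spectrum --- though the paper's argument yields the norm decay of $(\ppert\resl^0)^n$ as a by-product. Both proofs ultimately rest on the same two inputs, the subnormalization \eqref{subnPP} and the injectivity of the unperturbed resolvent, just packaged differently. Two cosmetic remarks: in your display you apply $T$ to $v^{\pm}$, which need not lie in $V$, so you are really using the globally defined bounded map $\ppert\resl^0$ there; and only positivity (not complete positivity) of $\ppert\resl^0$ is needed to conclude $\norm{Tv^{\pm}}_1=\tr(Tv^{\pm})$.
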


\begin{proof}
We will show that, for some $n$, $\norm{(\ppert\resl^0)^n}<1$. Then the resolvent series \eqref{resmin} converges in norm, even without the factor $\resl^0$ in each term, so as argued after that equation, the domain will not increase.

By definition, a finite rank operator and its adjoint can be written as
\begin{equation}\label{finrank}
  \ppert\resl^0\rho=\sum_i\sigma_i\tr(S_i\rho)      \quad\mbox{and}\
  (\ppert\resl^0)^*X= \sum_iS_i\tr(\sigma_iX),
\end{equation}
where the sum is finite and the $\sigma_i\in\th$ and the $S_i\in\bh$ are chosen linearly independent. The action on the linear span of the $\sigma_i$ is given by the finite dimensional matrix $P_{ij}=\tr S_i\sigma_j$ in the sense that
$\ppert\resl^0\sum_jx_j\sigma_j=\sum_i(\sum_jP_{ij}x_j)\sigma_i$.

Because $\norm{\ppert\resl^0}\leq1$,  all the eigenvalues of the matrix $P$ must be in the unit circle. If there are no eigenvalues of modulus one, the powers of $P$ and hence of $\ppert\resl^0$ contract exponentially to zero, and we are done. Now suppose $P$ has an eigenvalue of modulus one. Then so does its transpose, and we hence have an operator $X$ with $(\ppert\resl^0)^*X=\omega X$ with $\abs\omega=1$. Then 2-positivity implies
\begin{equation}\label{2posinproof}
  (\ppert\resl^0)^*(X^*X)\geq (\ppert\resl^0)^*(X)^*(\ppert\resl^0)^*(X)=X^*X.
\end{equation}
Hence iterating $(\ppert\resl^0)^*$ on $X^*X$ gives an increasing sequence, which is, however, bounded by $\norm{X^*X}\idty$, because $\norm{(\ppert\resl^0)^*}\leq1$. Hence this sequence must have a weak limit, and because $(\ppert\resl^0)^*$ is normal, this limit is a fixed point.
Therefore $P$ and its transpose, and consequently $(\ppert\resl^0)$ must have a non-zero fixed point $\sigma$. But then the resolvent series for $\resl\sigma$ has all equal terms and hence diverges, in contradiction to the
trace estimate in the proof of Prop.~\ref{prop:minsol}.
\end{proof}

\subsection{Gauging and pure states in the domain }
The Kraus decomposition of a completely positive map is not unique, since it depends on the choice of a basis in the dilation space. Thus we may transform the jump operators linearly among each other by a unitary matrix without changing the generator. This corresponds to a basis change in the transit space $\transit$. In addition there is a change of Kraus operators of $\semg_t$ for small $t$, which mixes the $\sqrt tL_\alpha$ and $\idty+tK$. This is well-known in the bounded case, and is sometimes called a change of gauge. We will verify here that it survives mutatis mutandis in the unbounded case.

\begin{lem}\label{lem:gauge}
Let $K,L$ determine a standard generator as in \eqref{standUB}, and let $\lambda_\alpha\in\Cx$ with $\sum_\alpha\abs{\lambda_\alpha}^2<\infty$, and $\beta\in\Rl$. Then for $\phi\in\dom K$ set
\begin{eqnarray}
  L'_\alpha\phi &=& L_\alpha\phi+\lambda_\alpha\phi \label{Lgauge}\\
  K'\phi&=& K\phi+\sum_\alpha\overline{\lambda_\alpha}\,L_\alpha\phi+ \frac12\Bigl(i\beta+\sum_\alpha\abs{\lambda_\alpha}^2\Bigr)\phi \label{Kgauge}
\end{eqnarray}
Then the sum in the second term in \eqref{Kgauge} converges in norm. Moreover, $K'$ is a contraction generator with $\dom K'=\dom K$. The standard generators for $(K,L)$ and $(K',L')$ coincide on $\domsq K$, so that they determine the same minimal solution.
\end{lem}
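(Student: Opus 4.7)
The proof splits along the three assertions of the lemma.

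For the first, norm convergence of the sum $\sum_\alpha\overline{\lambda_\alpha}L_\alpha\phi$ on $\dom K$, I would estimate partial sums by Cauchy--Schwarz,
\[
  \Bigl\|\sum_\alpha\overline{\lambda_\alpha}L_\alpha\phi\Bigr\|
    \leq \Bigl(\sum_\alpha|\lambda_\alpha|^2\Bigr)^{1/2}\Bigl(\sum_\alpha\|L_\alpha\phi\|^2\Bigr)^{1/2}
    \leq \Bigl(\sum_\alpha|\lambda_\alpha|^2\Bigr)^{1/2}\sqrt{-2\re\braket\phi{K\phi}}\,,
\]
the last step invoking the contraction condition \eqref{standUB}. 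Both factors are finite under the hypotheses, so the series converges absolutely and $K'$ is a well-defined linear operator on $\dom K$.

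For the claim that $K'$ generates a contraction semigroup with $\dom K'=\dom K$, I would write $K'=K+B$ and treat $B$ as a perturbation of $K$. The estimate above, together with $-\re\braket\phi{K\phi}\leq \|\phi\|\|K\phi\|$ and an arithmetic--geometric mean step, shows that $B$ is $K$-bounded with relative bound zero: for every $\varepsilon>0$ there is a $C(\varepsilon)<\infty$ with $\|B\phi\|\leq\varepsilon\|K\phi\|+C(\varepsilon)\|\phi\|$. I would then verify dissipativity of $K'$ by direct expansion; the cross terms $2\re(\overline{\lambda_\alpha}\braket\phi{L_\alpha\phi})$ and the constant $\sum|\lambda_\alpha|^2\|\phi\|^2$ produced by $\sum_\alpha\|L'_\alpha\phi\|^2$ match precisely the corresponding contributions to $2\re\braket\phi{K'\phi}$ (the $i\beta$ piece being purely imaginary), so the new contraction inequality $\sum_\alpha\|L'_\alpha\phi\|^2\leq-2\re\braket\phi{K'\phi}$ collapses algebraically to the original \eqref{standUB}. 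With dissipativity and relative bound zero both in hand, the standard $C_0$-semigroup perturbation theorem (Lumer--Phillips for $K+B$, with the range condition secured via the resolvent estimate $\|B(\lambda-K)^{-1}\|<1$ for large $\lambda$; see also \cite{Nagel}) yields that $K'$ on $\dom K$ is closed and generates a contraction semigroup.

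For the coincidence of the two standard generators on $\domsq K$, I would compute both expressions on an arbitrary rank-one operator $\ketbra\phi\psi$. Expanding $\sum_\alpha\ketbra{L'_\alpha\phi}{L'_\alpha\psi}$ via linearity/antilinearity of the ketbra yields the original $\sum_\alpha\ketbra{L_\alpha\phi}{L_\alpha\psi}$ plus three gauge-correction terms involving $\sum_\alpha\overline{\lambda_\alpha}L_\alpha\phi$, $\sum_\alpha\overline{\lambda_\alpha}L_\alpha\psi$ and $\sum_\alpha|\lambda_\alpha|^2$. These are cancelled by the additional contributions to $\ketbra{K'\phi}{\psi}+\ketbra\phi{K'\psi}$ dictated by the definition of $K'$, while the $i\beta$ piece cancels by bra/ket skew-symmetry, since $\ketbra{\frac{i\beta}{2}\phi}{\psi}+\ketbra\phi{\frac{i\beta}{2}\psi}=(\frac{i\beta}{2}-\frac{i\beta}{2})\ketbra\phi\psi=0$. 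Linearity extends the equality to all of $\domsq K$, and since this is a core for the standard generator by Prop.~\ref{prop:stdExit}(d) and the resolvent series \eqref{resmin} only uses the generator on such a core, the minimal solutions for $(K,L)$ and $(K',L')$ must agree.

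The main obstacle I expect is the second step: $B$ is unbounded and non-self-adjoint, so Kato--Rellich does not apply, and the $C_0$-semigroup version of the perturbation theorem has to be invoked with the range condition verified separately via the Neumann/resolvent estimate. The first and third steps are essentially Cauchy--Schwarz and direct ketbra expansion.
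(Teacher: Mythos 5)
Your proof follows essentially the same route as the paper's: Cauchy--Schwarz together with \eqref{standUB} gives norm convergence and infinitesimal $K$-boundedness of the perturbation, dissipativity plus a relatively-bounded-perturbation argument gives generation of a contraction semigroup on $\dom K$, and the coincidence of the two standard generators is a direct computation on rank-one operators $\ketbra\phi\psi$. The only real difference is cosmetic: you secure the range condition through Lumer--Phillips and the Neumann-series estimate $\norm{B(\lambda-K)\inv}<1$ for large $\lambda$, where the paper simply cites Kato. One caveat you should be aware of: the cancellations you invoke in both the dissipativity step and the generator-coincidence step (the cross terms $2\re\sum_\alpha\overline{\lambda_\alpha}\braket\phi{L_\alpha\phi}$ and the constant $\sum_\alpha\abs{\lambda_\alpha}^2\norm\phi^2$ coming from $\sum_\alpha\norm{L'_\alpha\phi}^2$ cancelling against the corresponding contributions of $K'$) occur only if the second and third terms in \eqref{Kgauge} carry a minus sign, as in the familiar bounded GKLS gauge transformation; with the signs as printed these contributions add rather than cancel, and neither $\sum_\alpha\norm{L'_\alpha\phi}^2\leq-2\re\braket\phi{K'\phi}$ nor the equality of generators would follow. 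The paper's own dissipativity computation silently uses the minus signs, so \eqref{Kgauge} as stated contains a sign typo which your write-up inherits; with the corrected signs your argument is sound.
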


\begin{proof}
First we show that $\norm{\sum_\alpha \overline{\lambda}_\alpha L_\alpha \phi}$ is $K$-bounded. Using the Cauchy-Schwarz inequality we have, for arbitrary $\psi\in\HH$,  $\phi\in\dom K$, and $\epsilon>0$
\begin{eqnarray*}
	\Bigl|{\sum_\alpha \overline{\lambda}_\alpha \braket \psi {L_\alpha \phi}}\Bigr|^2
    &\leq& \sum_i \abs{\overline{\lambda}_\alpha}^2 \sum_\alpha \abs{\braket \psi {L_\alpha \phi}}^2
	 \leq  A\, \norm{\psi}^2\, \sum_\alpha \norm{L_\alpha \phi}^2\\
	&\leq& A\, \norm{\psi}^2\, \abs{2\re \braket{\phi}{K\phi}}
      \leq \norm{\psi}^2\,4\Bigl(\frac A{2\epsilon}\norm\phi\Bigr)(\epsilon\norm{K\phi})\\
    &\leq& \norm{\psi}^2 \bigl(\epsilon\norm{K\phi}+ \frac A{2\epsilon}\norm\phi\bigr)^2
\end{eqnarray*}
where we have introduced the abbreviation $A=\sum_\alpha \abs{\overline{\lambda}_\alpha}^2$, used \eqref{standUB} at the second line,
and the estimate $4xy\leq(x+y)^2$ at the last.
Taking the square root and using that $\psi$ is arbitrary, we get
$\norm{\sum_\alpha \overline{\lambda}_\alpha {L_\alpha \phi}}\leq \epsilon \norm{K\phi}+ (A/(2\epsilon))\norm\phi$,
and, including the last term in \eqref{Kgauge}, $\norm{(K'-K)\phi}\leq\epsilon \norm{K\phi}+ C\norm\phi$, for some constant $C$.
That is, the perturbation is infinitesimally $K$-bounded. According to \cite[Theorem IV.1.1]{Kato}, $\epsilon<1$ is enough to conclude that $K'$ generates a semigroup with the same domain as $K$.

It remains to show that $K'$ is the generator of a contraction semigroup, i.e. that it is dissipative, which for a Hilbert  space operator just means $2\re\braket\phi{K'\phi}\leq 0$. For this we get
\begin{eqnarray*}
	2\re\braket\phi{K'\phi}&=& 2\re\braket\phi{K\phi}-2\re\sum_\alpha\braket{\lambda_\alpha\phi}{L_\alpha\phi}-\sum_\alpha\braket{\lambda_\alpha\phi}{\lambda_\alpha\phi}\\
	&=& 2\re\braket\phi{K\phi}+\sum_\alpha\norm{L_\alpha\phi}^2-\sum_\alpha\norm{L_\alpha\phi+{\alpha}_\alpha\phi}^2.
\end{eqnarray*}
Then the first two terms together are $\leq0$ because of \eqref{standUB}, and the third is obviously $\leq0$.

The equality of the generator then follows by the same elementary algebra as in the bounded case.
\end{proof}

A key result for the construction of non-standard generators is the following. It uses a condition from \cite{HolevoStoch}, which is related to the question whether the semigroup on the trace class is the dual of a semigroup on the compact operators. We show in Sect.~\ref{sec:NonClos} that it may be violated. On the other hand it is quite easy to verify in our two main examples.

\begin{prop}\label{prop:GENnoNewKetbra}
Let $\gen$ and $\gen_0$ be as in \eqref{standUB}. Assume in addition that each $L_\alpha$ is closable with $\dom L_\alpha^* \subset \dom K^*$ and $\sum_\alpha \norm{L_\alpha^* f}^2<\infty$ for $f\in \dom K^*$. \\
Then  $\ketbra \phi \psi \in \dom \gen$ for some $\phi,\psi\in\HH$  implies that  $\phi,\psi\in \dom K$.
\end{prop}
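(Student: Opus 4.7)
The plan is to exploit the duality between $\gen$ on $\th$ and its adjoint $\gen^*$ on $\bh$: under the extra hypotheses a rich family of rank-one operators lives in $\dom\gen^*$, and the resulting bilinear identity will force $\phi,\psi\in\dom K$.

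First I would verify that for every $f,g\in\dom K^*$ the operator $A=\ketbra gf$ lies in $\dom\gen^*$ with
\begin{equation*}
\gen^*(\ketbra gf)=\ketbra{K^*g}f+\ketbra g{K^*f}+\sum_\alpha\ketbra{L_\alpha^*g}{L_\alpha^*f},
\end{equation*}
the series converging in trace norm by Cauchy--Schwarz from $\sum_\alpha\|L_\alpha^*f\|^2,\sum_\alpha\|L_\alpha^*g\|^2<\infty$. On $\domsq K$, where $\gen$ has the explicit form \eqref{ppertLind}, this formula is immediate; it extends to all of $\dom\gen$ because $\domsq K$ is a core.

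Next, the identity $\tr(\gen(\rho)A)=\tr(\rho\,\gen^*(A))$ for $\rho=\ketbra\phi\psi\in\dom\gen$ yields
\begin{equation*}
\braket f{(\gen\rho)g}=\braket\psi{K^*g}\braket f\phi+\braket\psi g\braket{K^*f}\phi+\sum_\alpha\braket\psi{L_\alpha^*g}\braket{L_\alpha^*f}\phi
\end{equation*}
for all $f,g\in\dom K^*$. Since $M:=\gen\rho\in\th\subset\bh$, the left side equals $\braket f{Mg}$ and is bounded by $\|M\|\,\|f\|\|g\|$, turning the right side into a bounded sesquilinear form on $\dom K^*\times\dom K^*$.

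To conclude $\psi\in\dom K$ (the argument for $\phi$ is entirely symmetric), assume $\phi\ne0$ and fix $f_0\in\dom K^*$ with $\braket{f_0}\phi\ne0$. Solving the identity for the term containing $K^*g$,
\begin{equation*}
\braket{f_0}\phi\,\braket\psi{K^*g}=\braket{f_0}{Mg}-\braket\psi g\braket{K^*f_0}\phi-\sum_\alpha\braket\psi{L_\alpha^*g}\braket{L_\alpha^*f_0}\phi.
\end{equation*}
If the right side is bounded by $C\|g\|$ on $\dom K^*$, then $g\mapsto\braket\psi{K^*g}$ extends to a continuous functional on $\HH$, and closedness of $K$ gives $\psi\in\dom K^{**}=\dom K$ with $K\psi$ the Riesz representative.

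The hard part will be the jump series. Cauchy--Schwarz absorbs the $f_0$-dependent factor into a constant, reducing the task to controlling $(\sum_\alpha\|L_\alpha^*g\|^2)^{1/2}$ by $\|g\|$. The closed graph theorem for $g\mapsto\bigoplus_\alpha L_\alpha^*g$ from $(\dom K^*,\|\cdot\|_{K^*})$ to $\bigoplus_\alpha\HH$ delivers only the graph-norm bound $(\sum\|L_\alpha^*g\|^2)^{1/2}\le C(\|g\|+\|K^*g\|)$, which gives merely $|\braket\psi{K^*g}|\le C\|g\|+C'\|K^*g\|$ and is insufficient. The refinement I would pursue is to use the boundedness of the form in \emph{both} variables simultaneously: parametrising $f=(\lambda-K^*)^{-1}f_0$ and $g=(\lambda-K^*)^{-1}g_0$ bijectively covers $\dom K^*\times\dom K^*$, and from $K^*(\lambda-K^*)^{-1}=\lambda(\lambda-K^*)^{-1}-\idty$ all three of $\|g\|$, $\|K^*g\|$ and $(\sum_\alpha\|L_\alpha^*g\|^2)^{1/2}$ are uniformly controlled by $\|g_0\|$ (likewise for $f$). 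Substituting converts the identity into one manifestly bounded by $C\|f_0\|\|g_0\|$; then separating the pieces involving $g_0$ versus $\lambda g-g_0=K^*g$ isolates the functional $g_0\mapsto\braket{K\psi}{g_0}$ as continuous on $\HH$, identifying $K\psi$ as an element of $\HH$.
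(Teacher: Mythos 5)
Your first step is sound and coincides with the paper's starting point: the ``forward master equation'' $\braket f{(\gen\omega)g}=\braket{K^*f}{\omega g}+\braket f{\omega K^*g}+\sum_\alpha\braket{L_\alpha^*f}{\omega L_\alpha^*g}$ for $\omega\in\dom\gen$ and $f,g\in\dom K^*$, which the paper quotes from Theorem~A.2 of \cite{HolevoStoch} rather than re-deriving by a core argument. The gap is in the second half, and you have in fact located it yourself: the series $\sum_\alpha\braket\psi{L_\alpha^*g}\braket{L_\alpha^*f_0}\phi$ cannot be bounded by $C\norm g$, since the hypotheses only make $g\mapsto\bigoplus_\alpha L_\alpha^*g$ closed on $\dom K^*$, hence bounded in the \emph{graph} norm of $K^*$ (in the quantum birth example the $L_\alpha^*$ are genuinely unbounded). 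Your proposed repair does not close this gap. Writing $g=(\lambda-K^*)^{-1}g_0$ makes every term a bounded function of $g_0$, but this carries no information: the quantity you must control, $\braket\psi{K^*g}=\lambda\braket\psi{(\lambda-K^*)^{-1}g_0}-\braket\psi{g_0}$, is bounded in $\norm{g_0}$ for \emph{every} $\psi\in\HH$, because $(\lambda-K^*)^{-1}$ is bounded. A bound by $\norm{g_0}=\norm{(\lambda-K^*)g}$ is exactly the graph-norm bound you already dismissed as insufficient, and ``isolating the functional $g_0\mapsto\braket{K\psi}{g_0}$'' presupposes that $K\psi$ exists, which is the thing to be proved. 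The argument is circular at the decisive point.

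The missing idea is the gauge freedom of Lemma~\ref{lem:gauge}. Fix one vector $g\in\dom K^*$ with $\braket\psi g=1$ and set $\lambda_\alpha=-\overline{\braket\psi{L_\alpha^*g}}$; the hypothesis $\sum_\alpha\norm{L_\alpha^*g}^2<\infty$ gives $\sum_\alpha\abs{\lambda_\alpha}^2<\infty$, so the gauged pair $(K',L')$ generates the \emph{same} semigroup with $\dom K'=\dom K$, while by construction $\braket\psi{(L'_\alpha)^*g}=\braket\psi{L_\alpha^*g}+\overline{\lambda_\alpha}\braket\psi g=0$ for every $\alpha$. For this one frozen $g$ the jump series therefore vanishes identically, the identity collapses to $\braket f{(\gen\omega)g}=\braket{(K')^*f}\phi+\braket f\phi\braket\psi{(K')^*g}$, and varying $f$ exhibits $f\mapsto\braket{(K')^*f}\phi$ as a bounded functional, whence $\phi\in\dom (K')^{**}=\dom K'=\dom K$; the hermitian conjugate gives $\psi\in\dom K$. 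Note the division of roles: one argument of the form is frozen and the gauge is tailored to it so as to annihilate the jump term; the other argument is varied. Your attempt varies $g$ against a fixed $f_0$ without the gauge step, so the jump term survives and defeats the estimate.
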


\begin{proof}
Following Theorem A.2 in \cite{HolevoStoch}, the semigroup satisfies the so called ``forward master equation'' with the generator
\begin{equation}\label{forward}
  \braket {f} {(\gen \omega) g}
     =\braket {K^* f}{\omega g}+\braket{f}{\omega K^* g}+\sum_\alpha \braket{L_\alpha^* f}{\omega L_\alpha^* g}
\end{equation}
for $\omega \in \dom \gen$.

Now let $\omega=\ketbra \phi \psi$ with $\phi,\psi$ not necessarily in $\dom K$, and pick a vector $g\in\dom K^*$ such that $\braket\psi g=1$. This is possible, because $\dom K^*$ is dense. Now we apply Lemma~\ref{lem:gauge} with $\lambda_\alpha=-\overline{\braket\psi{L_\alpha^*g}}$. This leads to an equivalent form of the generator, for which, however,
$\braket\psi{L_\alpha^*g}=0$. Therefore, \eqref{forward} simplifies to
\begin{equation}\label{sideways}
  \braket {f} {(\gen \omega) g} =\braket {K^* f}\phi\, \braket\psi g +\braket{f}\phi\braket\psi {K^* g}.
\end{equation}
Solving for the first term on the right, using $\braket\psi g=1$, we find
\begin{equation}
  \braket {K^* f}{\phi}=\Bigl\langle{f}\Bigm|{\gen(\omega) g}-\phi \braket {\psi}{K^* g}\Bigr\rangle.
\end{equation}
Therefore $\phi\in \dom K^{**}=\dom K$, and $K\phi={\gen(\omega) g}-\phi \braket {\psi}{K^* g}$. By the same argument applied to the hermitian conjugates we get $\psi\in\dom K$.
\end{proof}

%%%%%%%%%%%%%%%%%%%%%%%%%%%%%%%%%%%%%%%%%%%%%%%%%%%%%%%%%%%%%%%%%%%%%%%%

\section{Examples of standard generators}
\subsection{Non-closable jump operators}\label{sec:NonClos}
A fundamental example of a contraction semigroup with unbounded generator is the half-sided shift on $\HH=\LL^2(\Rl^+,dx)$, given by
\begin{equation}\label{shift}
  \bigl(S_t\psi\bigr)(x)=\psi(x+t).
\end{equation}
Its generator $K$ is differentiation, so $\dom K$ consists of functions, which have an $\LL^2$-derivative. This means that they are, in particular, continuous, and hence, for $\psi\in\dom K$, the boundary value $\psi(0)$ is well-defined. This directly determines the exit space $\Exit=\Cx$ with $j\psi=\psi(0)$. Indeed,
\begin{equation}\label{exitshift}
  -\frac d{dt}\left.\braket{S_t\psi}{S_t\phi}\right|_{t=0}= -\frac d{dt}\int_t^\infty\!dx\ \overline{\psi(x)}\phi(x)=\overline{\psi(0)}\phi(0)=\braket{j\psi}{j\phi}.
\end{equation}
Hence the standard generators with no-event semigroup implemented by $S$ are parameterized by the cp map taking a one-dimensional system on exit $\Exit$ to the system Hilbert space, i.e., by a state $\Omega\in\th$. The intuitive picture is that whenever the system hits the boundary, it is reset to the ``rebound'' state $\Omega$.
The number of jump operators needed here depends on the mixedness of the rebound state $\Omega$. When $\Omega=\sum_\alpha\kettbra{\phi_\alpha}$ is the spectral resolution ($\phi_\alpha$ orthogonal but not normalized), we can set $L_\alpha:\Exit\to\HH$ to be $L_\alpha z=z\phi_\alpha$.

As operators on Hilbert space these jump operators are very ill-behaved. Formally, they would come out as $L_\alpha=\ketbra{\phi_\alpha}{\delta}$, where $\delta$ is the Dirac-$\delta$ at the origin. This $L_\alpha$ is not a closable operator, intuitively, because the value of a general $\LL^2$-function at a point is an ill-defined notion. More formally, we can find a sequence $\psi_n\in\dom K=\dom L_\alpha$ such that $\norm{\psi_n}\to0$, but $\psi_n(0)=42$. Then $L_\alpha\psi_n=42\phi_\alpha\neq0$, independently of $n$. Hence the closure of $L_\alpha$ would have to map $0$ into $42\phi_\alpha$, which is impossible for a linear operator. Since the usual definition of adjoint works well only for closable operators, the jump operators in the standard form \eqref{stand}, and even more so their adjoints, have to be interpreted with care. One can build a special notion of adjoint for this purpose \cite{ABaum}, but it is better to take the view of Prop.~\ref{prop:stdExit} and take $L_\alpha=M_\alpha j$, i.e., as completely determined by the {\it bounded} operators $M_\alpha$. In this way all the difficulties with singular $L_\alpha$ are controlled by the normalization loss of the no-event semigroup.
This is analogous to a well-known example of a generator perturbation for which the added term by itself makes little sense, namely point potentials ($\delta$-function potentials) for Schr\"odinger operators. Again, multiplication by a $\delta$-function, which is formally the potential ``added'' to the Laplacian, is a crazy operator by itself. However, as a perturbation of the Laplacian it makes sense and leads to a well-defined self-adjoint operator, which has an alternative description as the Laplacian with a modified boundary condition at the origin. The whole construction is quite stable, and we can also obtain the perturbed operator as the strong resolvent limit of Schr\"odinger operators with suitably scaled potentials with small support around the origin.

The example of this section is also discussed in \cite{Holx}, where it is shown that Arveson's ``domain algebra'' \cite{ArvDomAlg} can be trivial.

\subsection{Quantum birth process}
\subsubsection{The process}
A standard example of the classical theory is the so-called pure birth process. The state of the system at any time is given by an integer $n$, from where it can jump to $n+1$ with rate $\mu_n>0$. The generator thus acts on $\rho\in\ell^1(\Nl)$ as
\begin{equation}\label{pbirth}
  (\gen\rho)(n)=
      \begin{cases}\mu_{n-1}\rho(n-1)-\mu_n\rho(n)& \quad\mbox{for}\ n>0\\
                     \strut\hskip60pt -\mu_0\rho(0) &\quad\mbox{for}\ n=0.
      \end{cases}
\end{equation}
The case distinction can be avoided by the convention $\rho(-1)=0$. By telescoping sum one verifies $\sum_n(\gen\rho)(n)=0$, so the process appears to be conservative. On the other hand, noting that the expected time for the transition from $n$ to $n+1$ is $\mu_n\inv$ it seems possible that the process reaches infinity in finite time when $\mu_n$ increases sufficiently rapidly, i.e.,
\begin{equation}\label{birthXplode}
  \sum_n\frac1{\mu_n}=\tau<\infty.
\end{equation}
Indeed this is part of the well-established lore on this process (see \cite[Sect.~XVII.4]{Feller} and below). Our interest here is in a closely related quantum process, which is a standard semigroup on $\HH=\ell^2(\Nl)$ with $K$ and a single jump operator $L$ given by
\begin{eqnarray}
  K\ket n&=&-\frac{1}{2}\mu_n\ket n, \quad \dom K=\Bigl\{ \psi \in \ell^2(\Nl): \sum_{n=0}^{\infty}
            \mu_n^2\abs{\brAket{\psi}{n}}^2<\infty\Bigr\}, \nonumber \\
   L\ket n&=&\sqrt{\mu_n}\ket{n+1}, \quad \dom L \subset\dom K, \nonumber
\end{eqnarray}
where $\{\ket n\}$ is the canonical basis of the Hilbert space. As usual, we denote by $\gen^0\rho=K\rho+\rho K^*$ the no-event generator, which corresponds to the first term in the expression for the standard generator
\begin{equation}\label{QBirthrho}
  \bra n\gen\rho\ket m=-\frac12(\mu_n+\mu_m)\bra n\rho\ket m+ \sqrt{\mu_{n-1}\mu_{m-1}}\ \bra{n-1}\rho\ket{m-1}.
\end{equation}
This is the quantum analogue of \eqref{pbirth}, a simplified and generalized version of a process first studied in \cite[Example~3.3]{DaviesN}.  It reduces precisely to the classical case for purely diagonal density operators.  We therefore call the process generated by $K$ and $L$ the {\it quantum birth process}. Like its classical counterpart it is formally conservative, but due to the possibility of escape to infinity it may actually fail to be conservative. It will then be interesting to look at the details of the escape: Is there any quantum information ``coherently'' pushed to infinity?

For this simple example the resolvent series \eqref{resmin} can be summed explicitly. We get,
for any  $\rho\in\th$,
\begin{eqnarray}  \label{QBreso}
  \bra n\resl\rho\ket m
    &=& \frac1{\lambda+\frac12(\mu_n+\mu_m)}\sum_{k=0}^{\min(n,m)}p^k_{nm}\ \bra{n-k}\rho\ket{m-k}\\
   p^k_{nm}&=& \prod_{j=1}^k\frac{\sqrt{\mu_{n-j}\mu_{m-j}}}{\lambda+\frac12(\mu_{n-j}+\mu_{m-j})}.
\end{eqnarray}
Thus the domain of the generator of the minimal solution is $\dom\gen=\{\resl\rho'|\rho'\in\th\}$, and $\gen\rho=\gen\resl\rho'=\lambda\resl\rho'-\rho'$.

In general, it is not easy to determine $\dom\gen$ from the expression \eqref{stand}, here \eqref{QBirthrho}, which merely expresses the generator on the domain $\domsq K$. On the other hand, the matrix elements on the right hand side of \eqref{QBirthrho} make sense for any bounded operator $\rho$. It turns out that this reading of \eqref{QBirthrho} correctly expresses the extension by minimal solution:

\begin{lem}\label{lem:QBok}
For $\rho\in\dom\gen$, and all $n,m\in\Nl$, Eq.~\eqref{QBirthrho} holds. Conversely, if, for some trace class operator $\rho$,
the right hand side of Eq.~\eqref{QBirthrho} gives the matrix elements of a trace class operator, then $\rho\in\dom\gen$.
\end{lem}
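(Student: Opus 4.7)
The strategy is to use the explicit resolvent formula \eqref{QBreso} together with the abstract characterization $\dom\gen=\resl(\th)$ with $\gen\resl\rho'=\lambda\resl\rho'-\rho'$ from \eqref{domL}. Both directions of the lemma then reduce to verifying the same two-term recurrence relating $\bra n\cdot\ket m$ and $\bra{n-1}\cdot\ket{m-1}$, which is already built into the product structure of the coefficients $p^k_{nm}$ in \eqref{QBreso}. No convergence issue arises, since for each fixed $(n,m)$ the sum in \eqref{QBreso} is finite, of length $\min(n,m)+1$; the argument is purely algebraic.

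For the forward direction, let $\rho\in\dom\gen$ and pick $\rho'\in\th$ with $\rho=\resl\rho'$, so that $\gen\rho=\lambda\rho-\rho'$. The claimed identity \eqref{QBirthrho} rearranges to the recurrence $(\lambda+\frac12(\mu_n+\mu_m))\bra n\rho\ket m=\bra n\rho'\ket m+\sqrt{\mu_{n-1}\mu_{m-1}}\bra{n-1}\rho\ket{m-1}$, with the convention that terms with a $-1$ index are zero. To verify this, I would isolate the $k=0$ term of \eqref{QBreso}, which contributes exactly $\bra n\rho'\ket m$, and apply the factorization $p^k_{nm}=\frac{\sqrt{\mu_{n-1}\mu_{m-1}}}{\lambda+\frac12(\mu_{n-1}+\mu_{m-1})}\,p^{k-1}_{n-1,m-1}$ to the $k\geq 1$ terms. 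After the index shift $k\mapsto k-1$ the remaining sum is just \eqref{QBreso} evaluated at the shifted pair $(n-1,m-1)$, and contributes $\sqrt{\mu_{n-1}\mu_{m-1}}\bra{n-1}\rho\ket{m-1}$ as required. Boundary cases $\min(n,m)=0$ truncate the sum at $k=0$ and match the convention automatically.

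For the converse, let $\rho\in\th$ and suppose the right-hand side of \eqref{QBirthrho} is the matrix of a trace class operator $\sigma$. Set $\rho':=\lambda\rho-\sigma\in\th$. The identity $\bra n\sigma\ket m=-\frac12(\mu_n+\mu_m)\bra n\rho\ket m+\sqrt{\mu_{n-1}\mu_{m-1}}\bra{n-1}\rho\ket{m-1}$ is then precisely the same recurrence as in the forward direction. Solving it by induction on $\min(n,m)$, anchored at the boundary where the second term is absent, unwinds exactly to the right-hand side of \eqref{QBreso} applied to $\rho'$. Hence $\bra n\rho\ket m=\bra n\resl\rho'\ket m$ for every $n,m$, so $\rho=\resl\rho'\in\dom\gen$ and, as a by-product, $\gen\rho=\sigma$.

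The one subtlety worth flagging as an obstacle is that the trace-class hypothesis on $\sigma$ is doing real work in the converse: it is exactly what is required for $\rho'=\lambda\rho-\sigma$ to lie in $\th$ so that the minimal-solution resolvent can be applied. This also indicates why \eqref{QBirthrho} must not be read naively for arbitrary $\rho\in\th$: the right-hand side always makes sense as an array of numbers, but only when that array comes from a trace class operator does $\rho$ belong to $\dom\gen$.
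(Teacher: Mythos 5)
Your proof is correct and is essentially the paper's argument made explicit: the paper introduces formal entrywise extensions $\gen^\sharp$, $\resl^\sharp$ of \eqref{QBirthrho} and \eqref{QBreso} and asserts the identities $\gen^\sharp\resl^\sharp=\lambda\resl^\sharp-\id=\resl^\sharp\gen^\sharp$ as ``straightforward to verify'', which is exactly the two-term recurrence you check via the factorization of $p^k_{nm}$, and your induction-on-$\min(n,m)$ uniqueness argument in the converse is the entrywise form of the second identity. You also correctly identify that the only analytic input is $\lambda\rho-\sigma\in\th$, so nothing is missing.
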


\begin{proof}
Both \eqref{QBirthrho} and \eqref{QBreso} involve finite sums only for fixed $n,m$. Therefore, we can can consider them to define extensions $\gen^\sharp$ and $\resl^\sharp$ of $\gen$ and $\resl$ to arbitrary matrices $\rho$. It is straightforward to verify that $\gen^\sharp\resl^\sharp=\lambda\resl^\sharp-\id^\sharp=\resl^\sharp\gen^\sharp$. Take the first equation, and apply it to some $\rho'\in\th$. This shows that
$\gen^\sharp\resl\rho'=\gen^\sharp\resl^\sharp\rho'=\lambda\resl^\sharp\rho'-\rho'=\lambda\resl\rho'-\rho'=\gen\resl\rho'$, i.e., $\gen^\sharp$ and $\gen$ coincide on $\dom\gen$.

Now suppose that $\rho$ and $\gen^\sharp\rho$ are both trace class. Then by the second equation $\rho=\resl^\sharp(\lambda\rho-\gen^\sharp\rho)\in\resl^\sharp\th=\resl\th=\dom\gen$.
\end{proof}

\subsubsection{Conservativity}
From the integral \eqref{resgen} one sees that $\semg_t$ is conservative if and only if
$\tr\lambda\resl\rho=\tr\rho$ for all $\rho$. The trace of \eqref{QBreso} depends only on the sums with $n=m$, and hence the conservativity is exactly the same as for the classical problem.
The resolvent actually contains more information. Let $m(t)=-d/(dt)\tr\semg_t\rho$ be the ``arrival probability density'' at infinity. Then its Laplace transform is
\begin{equation}\label{arrivalatinfty}
 \widehat m(\lambda)=\int_0^\infty\!\! dt\ e^{-\lambda t}\ m(t)= 1-\tr\lambda\resl\rho.
\end{equation}
Starting from $\rho=\kettbra n$, and introducing the abbreviation $c_\alpha=\mu_{n+j}/(\lambda+\mu_{n+j})$  we get from \eqref{QBreso}
\begin{eqnarray}\label{QBcons}
  \widehat m(\lambda)&=&1-\tr\lambda\resl\kettbra n
      = 1- \sum_{k=0}^\infty \frac\lambda{\lambda+\mu_{n+k}}\prod_{j=0}^{k-1}\frac{\mu_{n+j}}{\lambda+\mu_{n+j}}\nonumber\\
     &=&1- \sum_{k=0}^\infty (1-c_k)\prod_{j=0}^{k-1}c_\alpha=\lim_{N\to\infty}\prod_{j=0}^{N}c_\alpha \nonumber\\
     &=& \prod_{j=n}^{\infty}\frac1{1+\lambda\mu_\alpha\inv}.
\end{eqnarray}
This has a straightforward probabilistic interpretation: The probability density of a sum of independent random variables is the convolution of the individual densities, corresponding to the product of the Laplace transforms. Hence the ``arrival time at infinity'' is the sum of infinitely many independent contributions, each exponentially distributed with density $\mu_\alpha e^{-\mu_\alpha t}$. When $\tau=\sum_\alpha\mu_\alpha\inv=\infty$, this sum is actually infinite with probability $1$, and $\widehat m(\lambda)=0$.

\subsubsection{Domain increase}
Next we consider the question whether the inclusion  $\dom\gen\supset\dom\gen^0$ is strict. For this it is helpful to note that for any $\rho\in\dom\gen$ and $q\in\Ir$ the limit
\begin{equation}\label{QRlf}
  \Phi_q(\rho)=\lim_{n\to\infty}\frac12(\mu_n+\mu_{n+q})\brAAket n\rho{n+q}
\end{equation}
exists. Indeed, setting $\rho=\resl\rho'$ this is clear from \eqref{QBreso}, using $p^k_{nm}\leq1$ and $\rho'\in\th$. Moreover, if $\rho\in\dom\gen^0$ the matrix elements in the above limit belong to the trace class operator $\gen^0\rho$, and therefore are summable and have to go to zero, so $\Phi_q(\dom\gen^0)=\{0\}$. We note that $\Phi_0$ plays a special role since, for $\rho\in\dom\gen$,
\begin{equation}\label{QBPhi0}
  \Phi_0(\rho)=\lim_{n\to\infty} \sum_{m=0}^n \Bigl(\mu_m\brAAket m\rho m-\mu_{m{-}1}\brAAket{m{-}1}\rho{m{-}1}\Bigr)=-\tr\gen\rho
\end{equation}
is exactly the infinitesimal normalization loss. When the semigroup is not conservative (the only case we consider now), we can directly find an element on which this does not vanish:
\begin{equation}\label{QBsig}
  \sigma=\sum_n\frac1{\mu_n}\kettbra n, \qquad\mbox{with}\ \Phi_0(\sigma)=1.
\end{equation}
For the other values of $q$ the existence of such elements depends, in fact, on how fast the $\mu_n$ grow.

\begin{prop} Let the rates $\mu_n$ {\bf grow moderately} in the sense that, for all $q,n$,
\begin{equation}\label{moderate}
 \left|1- \frac{\mu_{n+q}}{\mu_{n}}\right| \leq \frac cn
\end{equation}
for some constant $c$ independent of $n$. Then, for any $q\in\Ir$, let
\begin{equation}\label{sigf}
   \sigma^q= \sum_n \frac2{\mu_n+\mu_{n+q}} \ketbra n{n+q}.
\end{equation}
Then $\sigma^q\in\dom\gen$, and $\Phi_q(\sigma^{q'})=\delta_{qq'}$.
\end{prop}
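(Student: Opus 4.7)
My plan is to apply Lemma~\ref{lem:QBok}, which reduces the assertion $\sigma^q \in \dom\gen$ to two verifications: first, that $\sigma^q \in \th$; second, that evaluating the right hand side of \eqref{QBirthrho} formally on $\sigma^q$ yields the matrix elements of a trace class operator. For the first point I would use that any ``shift-type'' operator $T = \sum_n a_n \ketbra{n}{n+q}$ has $T^*T = \sum_n |a_n|^2\, \kettbra{n+q}$ diagonal, so $\norm{T}_1 = \sum_n |a_n|$. Applied to $\sigma^q$ with the elementary bound $\mu_n+\mu_{n+q}\geq\mu_n$, this gives $\norm{\sigma^q}_1 \leq 2\sum_n \mu_n\inv = 2\tau < \infty$.

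For the second point, a direct substitution into \eqref{QBirthrho} shows that $\gen^\sharp\sigma^q$ is again concentrated on the $q$-th diagonal, with entries
\begin{equation*}
  a_n^q := \bra{n}\gen^\sharp\sigma^q\ket{n+q}\;=\;-1 + \frac{2\sqrt{\mu_{n-1}\mu_{n+q-1}}}{\mu_{n-1}+\mu_{n+q-1}},
\end{equation*}
valid for admissible $n$ with $n,n+q\geq 1$, and $a_n^q = -1$ at the boundary indices where one of $\mu_{n-1}, \mu_{n+q-1}$ is to be interpreted as zero (per the convention of \eqref{pbirth}). By the shift-norm formula again, $\gen^\sharp\sigma^q \in \th$ is equivalent to $\sum_n |a_n^q| < \infty$.

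To establish this summability, I would combine the elementary identity
\begin{equation*}
  1 - \frac{2\sqrt{ab}}{a+b} \;=\; \frac{(\sqrt a - \sqrt b)^2}{a+b}
\end{equation*}
with the moderate-growth hypothesis \eqref{moderate}. Writing $\mu_{n+q-1}/\mu_{n-1} = 1 + \epsilon_n$ with $|\epsilon_n| = O(1/n)$, the identity yields $|a_n^q| = (1-\sqrt{1+\epsilon_n})^2/(2+\epsilon_n)$, which is $O(n^{-2})$ via the Taylor expansion $\sqrt{1+\epsilon_n} = 1 + \epsilon_n/2 + O(\epsilon_n^2)$. Hence $\sum_n |a_n^q| < \infty$, and Lemma~\ref{lem:QBok} delivers $\sigma^q \in \dom\gen$.

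Once $\sigma^q \in \dom\gen$, the orthonormality $\Phi_q(\sigma^{q'}) = \delta_{qq'}$ is immediate from \eqref{QRlf} and \eqref{sigf}: for $q' \neq q$ the matrix element $\bra{n}\sigma^{q'}\ket{n+q}$ vanishes for every $n$, so the defining limit is zero; for $q' = q$ the expression $\frac{1}{2}(\mu_n+\mu_{n+q})\bra{n}\sigma^q\ket{n+q}$ equals $1$ for every $n$. The main technical point is thus the $O(n^{-2})$ estimate on $a_n^q$: the $-1$ from the diagonal term of \eqref{QBirthrho} must exactly cancel the leading order of the geometric-mean ratio from the jump term, and summability rests on the quadratic (rather than merely linear) vanishing of $(\sqrt a - \sqrt b)^2/(a+b)$ that the moderate-growth bound is calibrated to produce.
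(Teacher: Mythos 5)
Your proposal is correct and follows essentially the same route as the paper: reduce membership in $\dom\gen$ via Lemma~\ref{lem:QBok} to the trace-class property of the single-diagonal operator $\gen^\sharp\sigma^q$, compute the entries $-1+2\sqrt{\mu_{n-1}\mu_{n+q-1}}/(\mu_{n-1}+\mu_{n+q-1})$, and use the identity $1-2\sqrt{ab}/(a+b)=(\sqrt a-\sqrt b)^2/(a+b)$ together with moderate growth to get an $O(n^{-2})$ bound (the paper bounds $g(a,b)\leq(1-b/a)^2$ where you Taylor-expand, which is the same estimate). Your explicit handling of the boundary indices and of the normalization $\Phi_q(\sigma^{q'})=\delta_{qq'}$ only makes explicit what the paper leaves implicit.
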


Moderate growth covers rational functions, stretched exponentials like $\mu_n\sim\exp(a n^\alpha)$ with $\alpha<1$, but exponentials $\mu_n=e^{an}$ clearly do not satisfy this condition.

\begin{proof}
The matrix \eqref{sigf} is clearly positive definite, and $\tr\sigma^q=\tau <\infty$. The critical question is whether $\gen\sigma^q$, as defined by \eqref{QBirthrho} is trace class. Like $\sigma^q$ itself, $\gen\sigma^q$ is of the form $\sum_na_n\ketbra n{n+q}$, and such an operator is trace class iff $\sum_n\abs{a_n}<\infty$ (Think of this as a diagonal operator multiplied with a shift from one side). Thus we have to show that the sum
\begin{equation}\label{QBsumtc}
  \sum_n\abs{\brAAket n{\gen\sigma^q}{n+q}}=\sum_n\Bigl|-1+\frac{2\sqrt{\mu_{n-1}\mu_{n+q-1}}}{\mu_{n-1}+\mu_{n+q-1}}\Bigr|
\end{equation}
is finite. Introducing the function
\begin{equation}\label{gab}
  g(a,b)=1-\frac{2\sqrt{ab}}{a+b}=\frac{(\sqrt a-\sqrt b)^2}{a+b}\leq \left(1-\frac ba\right)^2
\end{equation}
we find that for moderately growing $\mu_n$ the terms in the sum \eqref{QBsumtc} are bounded by $(c/n)^2$, so the sum converges.
\end{proof}

\noindent{\it Example: Exponentially growing $\mu_n$}\\
Let us put $\mu_n=a^n$ for some $a>1$. Then, for $q\neq0$,  the sum \eqref{QBsumtc} has all equal terms, and hence diverges. While the limit \eqref{QRlf} still exists for $\rho=\sigma^q$, and is equal to $1$, this does not help to establish domain increase, because $\sigma^q\notin\dom\gen$. Nor is there any other choice of $\rho$ of which we can prove in this way that $\rho\in\dom\gen\setminus\dom\gen^0$: For {\it any} $\rho\in\dom\gen$ we get $\Phi_q(\rho)=0$.

\begin{proof}
Consider the resolvent sum \eqref{QBreso}. Each factor in $p^k_{n,m}$ with $m=n+q$ is
\begin{equation}\label{QRestimexp}
  \frac{\sqrt{\mu_{n-j}\mu_{m-j}}}{\lambda+\frac12(\mu_{n-j}+\mu_{m-j})}
   \leq \frac{2\sqrt{\mu_{n-j}\mu_{m-j}}}{\mu_{n-j}+\mu_{m-j}}
   \leq \frac{2a^{q/2}}{1+a^q}=:\gamma.
\end{equation}
Hence $p^k_{n,n+q}\leq \gamma^k$, which is summable with respect to $k$. Assuming $q\geq1$ without loss,
\begin{eqnarray}\label{phiRapid}
    \abs{\Phi_q(\resl\rho)}
      &=&\lim_n\frac{\frac12(\mu_n+\mu_{n+q})}{\lambda+\frac12(\mu_n+\mu_{n+q})}
                   \left|\sum_kp^k_{n,n+q}\brAAket n\rho{n+q}\right|\nonumber\\
      &\leq&\lim_n\sum_{k=0}^{n} \gamma^k r_{n-k},
\end{eqnarray}
where we abbreviated $r_n=\abs{\brAAket n\rho{n+q}}$. This is a summable sequence because $\rho$ is trace class. The sum is consequently summable as the convolution of two such sequences, and therefore goes to zero as $n\to\infty$.
\end{proof}

\subsubsection{No new pure states}
We have seen that $\dom\gen$ is properly larger than $\dom\gen^0$. But are there also additional pure states in this larger domain? We could use Prop.~\ref{prop:GENnoNewKetbra} to answer this in the negative. Instead we give a simple alternative argument based on the range of resolvents.

\begin{prop}\label{prop:noNewKetbra}
 Let $\gen$ and $\gen^0$ be as above and $\ketbra\phi\psi\in\dom\gen$. Then $\ketbra\phi\psi\in\dom\gen^0$, i.e., $\phi,\psi\in\dom K$.
\end{prop}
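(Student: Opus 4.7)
The plan is to bypass Prop.~\ref{prop:GENnoNewKetbra} and instead exploit the explicit resolvent formula \eqref{QBreso}. Recall from \eqref{domL} that $\omega:=\ketbra\phi\psi\in\dom\gen$ is equivalent to $\omega=\resl\rho'$ for some $\rho'\in\th$. The decisive structural feature of \eqref{QBreso} is that, for fixed $(n,m)$, the inner sum over $k$ has only $\min(n,m)+1$ terms, so by picking one of the two indices to be small, the number of contributing matrix elements of $\rho'$ is capped independently of the other index.

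Concretely, I would proceed as follows. Assume $\phi,\psi\neq 0$ (otherwise there is nothing to prove) and set $m_0=\min\{m:\brAket m\psi\neq 0\}$, writing $\phi_n=\brAket n\phi$ and $\psi_{m_0}=\brAket{m_0}\psi$. Specializing \eqref{QBreso} to $m=m_0$ yields, for every $n$,
\[
\bigl(\lambda+\tfrac12(\mu_n+\mu_{m_0})\bigr)\phi_n\overline{\psi_{m_0}}
     =\sum_{k=0}^{\min(n,m_0)}p^k_{n,m_0}\,\rho'_{n-k,m_0-k}.
\]
Each factor in $p^k_{n,m_0}$ is at most $1$ by AM--GM, and the $k$-sum has at most $m_0+1$ terms. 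Squaring, applying Cauchy--Schwarz in the $k$-sum, and then summing over $n$, each fixed $k$ contributes the squared $\ell^2$-norm of a single column of $\rho'$, bounded by $\norm{\rho'}_2^2\leq\norm{\rho'}_1^2$. Dividing by $\abs{\psi_{m_0}}^2\neq 0$ and using the elementary estimate $\mu_n\leq 2(\lambda+\tfrac12(\mu_n+\mu_{m_0}))$ then gives
\[
\sum_n\mu_n^2\abs{\phi_n}^2\;\leq\;\frac{4(m_0+1)^2\norm{\rho'}_1^2}{\abs{\psi_{m_0}}^2}\;<\;\infty,
\]
i.e., $\phi\in\dom K$. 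A symmetric argument applied to $\ketbra\psi\phi=\resl((\rho')^*)$ (using that $\resl$ is hermitian-preserving), or equivalently run on the row $n_0=\min\{n:\phi_n\neq 0\}$ via the $n\leftrightarrow m$ symmetry of \eqref{QBreso}, gives $\psi\in\dom K$.

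I do not anticipate any real obstacles: the only infinite sums are over $n$, all $k$-sums are finite, and no interchange of limits is required. The only points that demand care are the uniform bound $p^k_{n,m}\leq 1$ and the boundary case $k=\min(n,m)$, where one of the $\rho'$-indices hits $0$; both are immediate. Notably the argument uses no growth assumption on the $\mu_n$, so this simple resolvent-based route reaches the same conclusion as Prop.~\ref{prop:GENnoNewKetbra} in the full generality relevant for the quantum birth process.
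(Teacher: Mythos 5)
Your proof is correct and follows essentially the same route as the paper: both write $\ketbra\phi\psi=\resl\rho'$ and exploit the fact that, by \eqref{QBreso}, a fixed column of $\resl\rho'$ involves only finitely many matrix elements of $\rho'$. The paper's execution is slightly slicker---it takes $m$ minimal with $\rho'\ket m\neq0$, so that only the $k=0$ term survives and $\phi\,\braket\psi m=(\lambda+\mu_m/2-K)\inv\rho'\ket m$ lies in $\dom K$ with no estimate needed---whereas you keep all $\min(n,m_0)+1$ terms and verify $\sum_n\mu_n^2\abs{\phi_n}^2<\infty$ directly via $p^k_{nm}\leq1$ and Cauchy--Schwarz, which works just as well (and, as you note, minimality of $m_0$ is not even essential there).
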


\begin{proof}
Since $\ketbra\phi\psi\in\dom\gen$ we may write $\ketbra\phi\psi=\resl\rho$ for some $\rho\in\th$. Let $m$ be the smallest index for which $\rho\ket m\neq0$. Then in the formula \eqref{QBreso} for the resolvent only the term $k=0$ gives a non-zero contribution. Noting that $p^0_{nm}=1$ we get
\begin{eqnarray}
  \bra n\resl\rho\ket m&=&\frac1{\lambda+\frac12(\mu_n+\mu_m)}\bra n\rho\ket m\quad\mbox{for all\ }n\nonumber\\
  \resl\rho\ket m&=&(\lambda+\mu_m/2-K)\inv\rho\ket m.\nonumber\\
  \phi\,\braket\psi m&\in&(\lambda+\mu_m/2-K)\inv\HH=\dom K.\nonumber
\end{eqnarray}
Now $\brAket\psi m$ cannot vanish, because $\rho\ket m\neq0$. Hence $\phi\in\dom K$, and the same argument applied to $\rho^*$ gives also $\psi\in\dom K$.
\end{proof}

\subsection{Diffusion on diagonals}
This is the basis for the example of a non-standard generator given in \cite{HolevoExa}. The basic idea is very similar to the quantum birth process, and the main conclusion is the same. However, the presentation in \cite{HolevoExa} was rather sketchy and incomplete, and did not mention an argument along the lines of Prop.~\ref{prop:GENnoNewKetbra}. These clarifications were the focus of our collaboration, and have been independently summarized in \cite{Holx}. Therefore we can be brief here.

The system Hilbert space in this case is $\HH=\LL^2(\Rl_+,dx)$. In order to stress the analogies we use the same notations as above for the generators. They are
\begin{eqnarray}
  K &=& \frac{d^2}{dx^2}\qquad \quad \dom K = \bigl\{\psi\in\HH\bigm| \psi(0)=0,\ \psi''\in\HH\bigr\} \\
  L &=& \sqrt2\,\frac{d}{dx}\qquad  \dom L=\dom K.
\end{eqnarray}
$K$ generates a diffusion with absorbtion at the boundary point $0$. Similarly, when seen as acting on integral kernels $\rho(x,y)$, $\gen$ generates a diffusion with a degenerate diffusion operator $(\frac d{dx}+\frac d{dy})^2$, corresponding to diffusion along the diagonals $x-y=const$ with absorption at the boundary of the positive quadrant. Both semigroups can be solved explicitly by the reflection trick: The semigroup without the absorbing boundary condition is translation invariant, and acts by convolution with a Gaussian kernel. The solution with absorption is then obtained by first extending the initial function to an antisymmetric one on the whole line, applying the Gaussian kernel and restricting to the half line afterwards.

In this way we get the time evolution (see \cite{Holx}, correcting \cite{HolevoExa}), written in terms of its action on integral kernels
$\omega:\Rl_+\times\Rl_+\to\Cx$ representing trace class operators:
\begin{eqnarray}
  (\semg_t\omega)(x,y)&=& \frac1{2\sqrt{\pi t}}\int_0^\infty\!\!d\xi\ \sum_{n=0,1}(-1)^n\exp\left\lbrace-\frac1{4t}\bigl|\min(x,y)-(-1)^n\xi\bigr|^2\right\rbrace \nonumber\\
                      &&\strut\qquad\times \omega(\xi+[x-y]_+,\xi+[y-x]_+) \label{HolSol}
\end{eqnarray}
Here we wrote $x_+=\max\{0,x\}$ for the positive part of a number. By integration \eqref{resgen} we get the resolvent
\begin{eqnarray}\label{Res}
  (\resl\omega)(x,y)&=&\int_0^\infty\!\! d\xi\ f_\lambda^{x,y}(\xi)\ \omega\bigl(\xi+[x-y]_+,\xi+[y-x]_+\bigr)\\
  f_\lambda^{x,y}(\xi)&=& \frac1{2\sqrt\lambda} \sum_{n=0,1}(-1)^n\exp\left\lbrace-\sqrt\lambda\bigl|\min(x,y)-(-1)^n\xi\bigr|\right\rbrace.\label{fxy}
\end{eqnarray}
Since $f_\lambda^{0,y}=f_\lambda^{x,0}=0$ we must have $\resl\omega(0,y)=\resl\omega(x,0)=0$ for all $\omega$. Hence, for all $\omega\in\dom\gen$,
$\omega(0,y)=\omega(x,0)=0$. Similarly, one sees that the kernel $\omega(x,y)$ has to be continuous for $\omega\in\dom\gen$.

To find the normalization loss we can integrate \eqref{HolSol} to get
\begin{equation}\label{HolNormalize}
  \tr\semg_t\omega=\tr\omega-\int_0^\infty\!\!d\xi\ \erfc\left(\frac\xi{2\sqrt t}\right) \omega(\xi,\xi),
\end{equation}
where $\erfc$ denotes the complementary error function. We substitute $\xi\mapsto 2\sqrt t \eta$ and take from \cite{Holx} the information that, for $\omega\in\dom\gen$, we have $\omega(x,x)=\Lambda x+\order(x)$ as $x\to0$.
Then by dominated convergence, and using $\int_0^\infty\!dx\ x\,\erfc(x)=1/4$, we find
\begin{equation}\label{HolNormalize2}
  \tr\semg_t\omega=\tr\omega- 2\sqrt t \int_0^\infty\!\!d\eta\ \erfc(\eta)\omega(2\sqrt t\,\eta,2\sqrt t\,\eta)
        =\tr\omega-t\Lambda\ +\order(t)
\end{equation}
The diagonal derivative $\Lambda=-d/(dx)\omega(x,x)|_{x=0}$ plays the same role as $\Phi_0(\rho)$ in the previous section (compare \eqref{QBPhi0}).

The crucial observation is once again, that $\ketbra\phi\psi\in\dom\gen$ implies $\ketbra\phi\psi\in\dom\gen^0$. Two techniques are available for showing this: In analogy to Prop.~\ref{prop:noNewKetbra} one can
directly show that $(\resl\omega)\chi\in\dom K$ for suitable $\chi$. But in this case it is preferable to invoke Prop.~\ref{prop:GENnoNewKetbra}.

%%%%%%%%%%%%%%%%%%%%%%%%%%%%%%%%%%%%%%%%%%%%%%%%%%%%%%%%%%%%
\section{Examples of non-standard generators}\label{sec:nonstand}
We focus here on the examples, which come immediately out of the two examples studied in the previous section: The quantum birth and the diagonal diffusion semigroups. In both cases we considered a standard generator $\gen$, arising from positive perturbation of a no-event generator $\gen^0$.  Now we go one step further and add to $\gen$ another positive term, leading to the generator $\dom\genh$ of a conservative semigroup. This perturbation again follows the minimal solution pattern (Sect.~\ref{sec:minsol}) with a rank one perturbation, for simplicity. That is,  we set
\begin{equation}\label{genh}
  \genh\rho=\gen\rho-\tr(\gen\rho)\rhoh,\qquad\dom\genh=\dom\gen.
\end{equation}
The added term is completely positive on $\dom\gen$, because normalization loss is negative. The equality of domains follows from Lemma~\ref{lem:finrank}. In dynamical terms, the process will reset to $\rhoh$, whenever there is an ``arrival event'', which under $\gen$ would mean a loss of normalization: in the quantum birth case, this will be an arrival at infinity, and in the diagonal diffusion case an arrival at the origin.

We have here two construction steps in which a completely positive term is added to the generator. Why can they not be fused into a single step adding both terms simultaneously? Indeed, if this were possible, $\genh$ would be, by definition, a standard generator. The key observation is that $\gen$ is infinitesimally trace preserving on $\dom\gen^0$, so $\ppert=\gen-\gen^0$ is already as large as it can be. However, since the semigroup $\exp(t\gen)$ is {\it not} conservative, $\dom\gen$ must be properly larger than $\dom\gen^0$, because a generator which is infinitesimally conservative on its full domain would generate a conservative semigroup. The term added when passing from $\gen$ to $\genh$ vanishes on $\dom\gen^0$, so is strictly associated with the ``new'' part of the domain. The various generators and domains are graphically summarized in Fig.~\ref{fig:doms}.
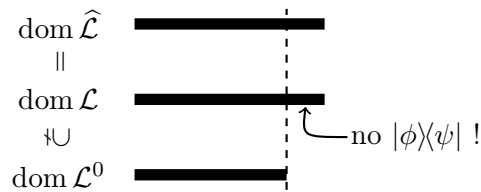
\begin{figure}[h]
\centering
\begin{tikzpicture}\def\w{.15cm}
   \node at (-1,0) {$\dom\gen^0$};    \draw[line width=\w] (0,0)--(2,0);
   \node at (-1,1) {$\dom\gen$};    \draw[line width=\w] (0,1)--(2.5,1);
   \node at (-1,2) {$\dom\genh$};    \draw[line width=\w] (0,2)--(2.5,2);
   \draw[thick,dashed] (2,-.2)--(2,2.2);
   \node[rotate=-90] at (-1,0.5) {$\supsetneq$}; \node[rotate=-90] at (-1,1.5) {$=$};
   \draw[thick,->] (2.8,0.5) .. controls (2.2,0.5).. (2.25,.9);
   \node at (3.7,0.5) {no $\ketbra\phi\psi$\ !};
\end{tikzpicture}
\caption{Generators and their domains in the construction of a non-standard generator $\genh$.}
\label{fig:doms}
\end{figure}

The same relations may hold in the discrete classical case, namely whenever $\gen$ is a standard generator which appears conservative on the pure states in its domain, but actually allows some escape to infinity and hence generates a non-conservative semigroup. Indeed, any standard generator is completely determined by its action on the pure states (even though its full domain might not be spanned just by these). If $\genh$ were standard, since it coincides with $\gen$ on the pure states, we would have $\genh=\gen$. On the other hand, these generators are clearly different, since one generates a conservative semigroup and the other does not.

In the quantum case this argument is too simple, since not all pure states are in the domain, but only those $\kettbra\psi$ with $\psi\in\dom K$. So the possibility we have to discuss is that there might be another contraction generator $\widetilde K$, and associated no-event semigroup generator $\widetilde\gen^0$, from which $\genh$ arises in a one-step minimal solution construction.

It is here that we can use the fact (Props.~\ref{prop:noNewKetbra} and \ref{prop:GENnoNewKetbra}) that for all $\ketbra\phi\psi\in\dom\genh$ we actually have $\phi,\psi\in\dom K$. So even if we had started from some other $\widetilde K$, we could still reconstruct  $\dom K$ from $\dom\genh=\dom\gen$. Since $\genh$ and $\gen$ coincide on $\domsq K$  they would arise as minimal solutions from the same equation on $\domsq K$.  Therefore, in both examples $\genh$ is non-standard.

%%%%%%%%%%%%%%%%%%%%%%%%%%%%%%%%%%%%%%%%%%%%%%%%%%%%%%%%%%%%
\section{Conclusions and outlook}
We have explicitly defined a notion of unbounded GKLS generators, which we feel summarizes an agreement in the literature \cite{DaviesN,DaviesG,FagnolaProy,HolevoExc,HolevoStoch}. However, as we have shown, not all generators are of this form.

As the defining feature of the standard form we took the existence of many pure states in the domain.
Alternatively,  one can start from the observation that $\semg_t-\semg^0_t$ is completely positive for all $t$. Bill Arveson \cite{ArvesonB,ArvDomAlg} calls no-event semigroups $\semg_t^0$ with this property the {\it units} of $\semg_t$. His ``standard'' case, which he calls ``type I'', is defined by the existence of many such units, which arise by Lemma~\ref{lem:gauge} from each other. He gives examples, which are not of this kind, especially one with no units at all \cite{ArvesonB,Powers} (``type III''). It is unclear how our notion of standardness and Arveson's type I are exactly related, and this seems like an excellent question for further research.

Of course, the long-term goal is to arrive at a better understanding of non-standard generators, for which a closer look at the classical theory will certainly be helpful.

\section*{Acknowledgements}
%A.S.H. acknowledges  support from the Russian Science Foundation under grant 14-21-00162.
I.S. and R.F.W. are supported by RTG 1991 and SFB DQ-mat of the DFG.
\enlargethispage{2cm}

%%%%%%%%%%%%%%%%%%%%%%%%%%%%%%%%%%%%%%%%%%%%%%%%%%%%%%%%%%%
\bibliography{gen}
\end{document}